\newtheorem{theorem}{Theorem}
\newtheorem{lemma}{Lemma}
\newtheorem{observation}{Observation}
\newcommand{\rk}{\operatorname{rk}}
\DeclareMathAlphabet{\mathttit}{T1}{\ttdefault}{\mddefault}{\sldefault}
\newcommand{\poly}{\mathttit}
\newcommand{\per}{\operatorname{per}}
\newcommand{\transpose}{^{\mathsf T}}
\newcommand{\kC}{\textsc{$k$-Cycle}}
\newcommand{\keC}{\textsc{$k,e$-Cycle}}
\newcommand{\LC}{\textsc{$k$-Long Cycle}}
\newcommand{\keLC}{\textsc{$k,e$-Long Cycle}}
\newcommand{\SkCC}{\textsc{Shortest $k,t,e$-Colourful Cycle}}
\newcommand{\SkpCC}{\textsc{Shortest $k',t,e$-Colourful Cycle}}
\newcommand{\HC}{\textsc{Hamiltonian Cycle}}
\begin{document}

\title{Finding longer cycles via shortest colourful cycle}

\ifdefined\A
\author{Anonymous}
\else

\author{Andreas Bj\"orklund}
\address[AB]{IT University of Copenhagen} 
\email{anbjo@itu.dk}
\author{Thore Husfeldt}
\address[TH]{IT University of Copenhagen} 
\email{thore@itu.dk}
\date{}
\fi

\maketitle
\begin{abstract}
We consider the parameterised {\keLC} problem, in which you are given an $n$-vertex undirected graph $G$, a specified edge $e$ in $G$, and a positive integer $k$, and are asked to decide if the graph $G$ has a simple cycle through $e$ of length at least $k$. We show how to solve the problem in $1.731^k\operatorname{poly}(n)$ time, improving over the $2^k\operatorname{poly}(n)$ time algorithm by [Fomin~\emph{et al.}, TALG 2024], but not the more recent $1.657^k\operatorname{poly}(n)$ time algorithm by [Eiben, Koana, and Wahlstr\"om, SODA 2024]. 
When the graph is bipartite, we can solve the problem in $2^{k/2}\operatorname{poly}(n)$ time, matching the fastest known algorithm for finding a cycle of length \emph{exactly} $k$ in an undirected bipartite graph [Bj\"orklund \emph{et al.}, JCSS 2017].

Our results follow the approach taken by [Fomin~\emph{et al.}, TALG 2024], which describes an efficient algorithm for finding cycles using many colours in a vertex-coloured undirected graph. Our contribution is twofold. First, we describe a new algorithm and analysis for the central colourful cycle problem, with the aim of providing a comparatively short and self-contained proof of correctness. Second, we give tighter reductions from {\keLC} to the colourful cycle problem, which lead to our improved running times.
\end{abstract}

\newpage
\section{Introduction}
The \textsc{Hamiltonian Cycle} problem asks if a given undirected $n$-vertex graph $G$ has a simple cycle of length equal to its number of vertices.  The problem is on Karp's original list of NP-hard problems from 1972 \cite{Karp1972}. 
Bj\"orklund~\cite{Bjorklund2014} showed that there is a Monte Carlo algorithm asymptotically running in $O(1.657^n)$ time that detects whether an  undirected $G$ has a Hamiltonian cycle or not with high probability. 
In the present paper we show how the key ideas of that algorithm can be used also for a parameterised version of \textsc{Hamiltonian Cycle}.

One natural parameterisation of {\HC} is the {\kC} problem:
Given a graph, decide if it contains a simple cycle of length exactly $k$. It is known that the Hamiltonicity algorithm mentioned above exists in this parameterised version, showing that the  \textsc{$k$-Cycle} problem can be solved in $1.657^k\operatorname{poly}(n)$ time in general undirected graphs, and in $2^{k/2}\operatorname{poly}(n)$ time in bipartite undirected  graphs, see~Bj\"orklund \emph{et al.}~\cite{BjorklundHKK2017}.

However, in the present paper we study the perhaps even more natural parameterisation of {\HC} that we call {\LC} (it also goes under the slightly misleading name {\textsc{Longest Cycle} in the literature). The problem asks given a graph and a positive integer $k$ whether there is a simple cycle in the graph of length  \emph{at least} $k$; we call such cycles \emph{$k$-long}. We consider the problem in the more general form, which we call {\keLC}, where you are also given an edge~$e$ in the graph, and are asked for a $k$-long cycle that passes through $e$. Note that one can effectively reduce the problem called \textsc{Longest $s,t$-Path} in~\cite{FominGKSS2023_full} (is there a simple path from given vertex~$s$ to given vertex~$t$ on at least $k$~vertices?) to \textsc{$k,st$-Long Cycle}, after possibly adding the  edge $st$ to the graph.
\medskip

Our first result is that {\keLC} can be solved just as fast as the fastest algorithms known for {\keC} in the case of bipartite undirected graphs.

\begin{theorem}[Long cycle in bipartite undirected graphs]
\label{thm: bip}
The problem {\keLC}  in an $n$-vertex bipartite undirected graph can be solved in $2^{k/2}\operatorname{poly}(n)$ time. 
\end{theorem}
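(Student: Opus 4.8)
\medskip
\noindent\emph{Proof plan.}
I would run the colourful-cycle strategy of Fomin~\emph{et al.}, placing the colours on only one side of the bipartition; this is exactly what converts a general $2^{k}\operatorname{poly}(n)$-type bound into $2^{k/2}\operatorname{poly}(n)$ and matches the bipartite \kC{} algorithm of Bj\"orklund \emph{et al.} Fix the two sides $V(G)=A\cup B$ and write the prescribed edge as $e=uv$ with $u\in A$, $v\in B$. The single structural fact I rely on is that a simple cycle of a bipartite graph alternates between $A$ and $B$: it has even length, with exactly half of its vertices on each side. Hence a cycle through $e$ is $k$-long \emph{if and only if} it visits at least $\lceil k/2\rceil$ vertices of $A$. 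The same alternation holds for closed walks.

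Put $k':=\lceil k/2\rceil$. The reduction is then almost transparent: it suffices to decide whether $G$ has a cycle through $e$ that carries $k'$ vertices of $A$ markable by $k'$ distinct labels, which is an instance of the colourful cycle problem {\SkpCC}. (In the bipartite case one may even colour each $A$-vertex by itself, so that the reduction is essentially the identity; the ``tighter'' reductions promised in the introduction are what one needs for \emph{general} graphs, where the clean alternation is absent and the colours must be introduced with care.) The real work is the core lemma, which I would establish in the form: {\SkpCC} in which the cycle must use $k'$ colours is solvable by a randomised algorithm in $2^{k'}\operatorname{poly}(n)$ time. The mechanism I have in mind is the algebraic sieve behind Bj\"orklund's Hamiltonicity and $k$-cycle algorithms: over a field $\mathbb{F}_{2^{\ell}}$ with $\ell=O(k+\log n)$, build a polynomial, graded by length, whose monomials encode pairs consisting of a closed walk through $e$ together with an assignment of the $k'$ labels to $k'$ of the walk's $A$-occurrences, choosing the edge and vertex variables so that (a) closed walks that are not simple cycles cancel in characteristic two, and (b) distinct label-assignments of the same cycle yield distinct monomials. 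An inclusion--exclusion over the $2^{k'}$ label subsets then isolates the contributions in which all $k'$ labels occur, each term being a polynomial-time matrix evaluation. By (b) the surviving monomials do not collide, so the polynomial is nonzero in degree $d$ exactly when there is a simple cycle through $e$ of length $d$ with at least $k'$ vertices in $A$; its least such degree is the length of a shortest ``colourful'' cycle through $e$, and for the decision version it already suffices that the polynomial be nonzero somewhere. Composing the reduction with this lemma at $k'=\lceil k/2\rceil$ gives $2^{k/2}\operatorname{poly}(n)$ time, with the only randomness sitting in the polynomial-identity test, so $O(k+\log n)$-bit field elements and $O(\log n)$ repetitions keep the failure probability small.

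The step I expect to be the genuine obstacle is the correctness of the sieve, and in particular fitting it into a \emph{short}, self-contained proof as the paper promises. Two things must be settled at once. First, that closed walks which fail to be simple cycles vanish over characteristic two: the standard device is to pair a non-simple walk with the walk obtained by reversing a loop at a repeated vertex, but one must check that this reversal preserves the full monomial, including the label bookkeeping on the $A$-occurrences, while genuinely matching distinct walks. Second, in the opposite direction, that a simple cycle strictly longer than $k$ is \emph{not} annihilated: it has more than $k'$ vertices in $A$, hence an even number of ways to choose and label $k'$ of them, yet it survives because those choices are separated by the auxiliary variables rather than summed. The bipartite alternation argument, the reduction itself, and the parity bookkeeping around $u$, $v$ and $\lceil k/2\rceil$ are by comparison routine.
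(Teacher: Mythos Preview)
Your reduction is exactly the paper's: colour each edge by its endpoint on one fixed side of the bipartition, observe that a cycle is $k$-long iff it meets that side in at least $\lceil k/2\rceil$ vertices iff it is $\lceil k/2\rceil$-colourful, and invoke the $2^{k'}\operatorname{poly}(n)$ algorithm for {\SkCC} (Theorem~\ref{thm: SkCC}) with $k'=\lceil k/2\rceil$. The paper's proof of Theorem~\ref{thm: bip} is literally this paragraph; it treats the colourful-cycle algorithm as a black box established separately.

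Where you diverge is in the sketch of that black box, which occupies most of your proposal. You outline a walk-based polynomial with inclusion--exclusion over the $2^{k'}$ label subsets and a pairing that reverses a loop at a repeated vertex of a non-simple closed walk. This is closer in spirit to the Fomin~\emph{et al.}\ approach that the paper explicitly sets out to replace. The paper instead works with \emph{cycle covers} expressed as a sum of permanents (equal to determinants in characteristic~$2$); the $2^{k'}$ terms arise from a filter bit vector~$b$ that orients the rainbow arcs, a random $\{0,1\}$-matrix enforces that $k'$ distinct colours appear, and the cancellation pairs a defective cycle in a cover with its reversal rather than a sub-loop of a walk. Your route is plausible but you correctly flag the hard step: making the walk-reversal involution respect the label bookkeeping is exactly the ``very technical cancellation argument'' the paper is trying to avoid, so if you pursue it you should not expect the short self-contained proof the paper advertises.
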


Our proof in Section~\ref{sec: bip} is simple and offers insights into how to obtain a speedup over the $2^k\operatorname{poly}(n)$ time bound by Fomin~\emph{et al.}~\cite{FominGKSS2023_full} also in the general graph case.

\begin{theorem}[Main; Long cycle in undirected graphs]
\label{thm: main}
The problem {\keLC} in an $n$-vertex undirected graph can be solved in $1.731^{k}\operatorname{poly}(n)$ time. 
\end{theorem}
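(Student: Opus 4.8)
The strategy is to reduce {\keLC} to the {\SkCC} problem and to solve the latter with an algebraic algorithm based on the cancellation ideas behind Bj\"orklund's Hamiltonicity algorithm. In {\SkCC} we are given an $n$-vertex undirected graph $G$ whose vertices are coloured from a palette of $t$ colours, a distinguished edge $e$, and an integer $k \le t$, and we must return the length of a shortest simple cycle through $e$ whose vertices carry at least $k$ distinct colours, or report that none exists; {\SkpCC} is the same problem with a separate threshold $k'$, which lets the reduction call the subroutine with a colour demand that differs from the target length. The formulation is chosen so that the implication ``$k$ distinct colours $\Rightarrow$ at least $k$ vertices'' makes any cycle returned by the subroutine, run with colour threshold equal to $k$, a certificate that $G$ has a $k$-long cycle through $e$. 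I would establish, as the technical core of the paper, that {\SkCC} can be solved in $\beta^{t}\operatorname{poly}(n)$ time for a constant $\beta$: build a polynomial over a field of characteristic two whose monomials encode labelled closed walks through $e$ of a prescribed length, use an involution to cancel the walks that are not cycles, attach a colour marker to each vertex, and sieve for ``meets at least $k$ colours'' by an inclusion--exclusion over colour classes arranged so that it contributes the base $\beta$ rather than the trivial $2$; scanning the length from $k$ upward then yields the shortest colourful cycle. In a bipartite graph one colours only the vertices of one side and uses half the threshold, which halves the relevant exponent; together with the reduction below this gives Theorem~\ref{thm: bip}.

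For Theorem~\ref{thm: main} I would wrap this subroutine in a reduction that keeps the number of colours under control. First discard everything outside the block of $G$ that contains $e$, so that $G$ may be assumed $2$-connected (rejecting at once if that block is a single edge); in a $2$-connected graph a long cycle through $e$ coexists with a large supply of cycles through $e$, which is what makes the probabilistic step go through. Then colour $V(G)$ uniformly at random with $t = \lceil\alpha k\rceil$ colours and call the {\SkCC}/{\SkpCC} solver with an appropriate threshold, repeating over independent colourings; I would also run, as a cheap preliminary, the {\keC} algorithm for every length in a short window above $k$, so that in the main case one may assume any $k$-long cycle through $e$ is in fact comfortably longer than $k$. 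The completeness argument is then a concentration estimate: a sufficiently long cycle through $e$ meets the required number of colours except with a probability that a modest, exponential-in-$k$ but sub-$1.731^{k}$ number of recolourings drives below any inverse polynomial. Writing the total cost as the maximum of the preliminary search cost and (repetitions)~$\times\,\beta^{\alpha k}\operatorname{poly}(n)$ and optimising the window width, the palette factor $\alpha$, and the repetition count against the value of $\beta$ from the core section gives a bound below $1.731^{k}\operatorname{poly}(n)$; improving the choice of these parameters over the na\"ive one is exactly how the bound beats the $2^{k}$ of Fomin~\emph{et al.}

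The main obstacle is the {\SkCC} algorithm and its analysis: proving that the characteristic-two cancellation still isolates cycles once the colour markers are present, and --- for the running time --- arranging the colour sieve so that it costs $\beta^{t}$ rather than $2^{t}$, since it is the size of $\beta$ that dictates how large the palette $t$ may be and hence how much concentration the reduction can extract. A secondary difficulty is making the reduction tight: one has to check that the preliminary window can be taken both short enough that its cost stays within budget and long enough that, outside it, a witness cycle is so long that few colours suffice, and then verify that all the constants admit a common value keeping the product below $1.731$.
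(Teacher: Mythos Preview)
Your high-level scaffolding is right --- a preliminary window search via {\keC}, a reduction to a colourful-cycle subroutine, and a parameter optimisation --- but the place you locate the speedup is a genuine gap. You expect the {\SkCC} subroutine itself to run in $\beta^{t}\operatorname{poly}(n)$ for some base $\beta<2$ in the palette size, and you name ``arranging the colour sieve so that it costs $\beta^{t}$ rather than $2^{t}$'' as the main obstacle. In the paper this is not attempted, and in fact the paper recalls a conditional lower bound (from~\cite{FominGKSS2023_full}) saying that a $(2-\epsilon)^{k}\operatorname{poly}(n)$ algorithm for the colourful-cycle problem would yield an exponentially faster \textsc{Set Cover} algorithm. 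The subroutine in Theorem~\ref{thm: SkCC} runs in $2^{k}\operatorname{poly}(n)$, where $k$ is the colourfulness threshold, and this base is not improved.

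The actual source of the $1.731^{k}$ bound is entirely in the reduction: one calls {\SkCC} with a threshold $k'$ strictly smaller than $k$, and the running time is $2^{k'}\operatorname{poly}(n)$. The difficulty you are missing is how to guarantee that a $k'$-colourful cycle still has length at least $k$ when $k'<k$; a random vertex colouring from a small palette cannot do this, because a short cycle may well be $k'$-colourful. The paper's mechanism is different: it samples a random vertex subset $S$ and colours \emph{edges} by their $S$-endpoint, so that along any cycle the edges inside an ``$S$-run'' of consecutive $S$-vertices share only $l{+}1$ colours among $l{+}2$ edges (Observation~\ref{obs: Xrun}); hence each $S$-run forces one non-rainbow edge. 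The edge-weight parameter in {\SkCC}, which your formulation omits, is used to count split edges and thereby lower-bound the number of $S$-runs, giving $k'+\tfrac{t}{2}\ge k$. Balancing the window cost $1.657^{\beta k}$ against $2^{k'}$ with $k'=\lceil(1-(1-\epsilon)(1-\alpha)(\alpha+\alpha^{2})\beta/2)k\rceil$ and tuned $\alpha,\beta$ yields the stated bound. In short: the speedup is purchased by a structured edge colouring plus a weight constraint that together let the colourfulness threshold drop below $k$, not by a faster colourful-cycle algorithm.
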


This addresses a problem by Fomin \emph{et al.}~(\cite{FominGKSS2023_full}, last paragraph).\footnote{We were unaware that this problem has been solved by Eiben, Koana, and Wahlstr\"om~\cite{Eiben24} with a $1.657^k\operatorname{poly}(n)$ time algorithm using similar but much more general techniques than ours.}

\medskip
At the core of our new {\keLC} algorithm is an algorithm for the problem of finding a cycle with many colours in a coloured graph. This idea is not new. Indeed, it is the very same approach taken by Fomin \emph{et al.}~\cite{FominGKSS2023_full}, and we can use their Theorem~1.4 after minor modifications to obtain the results directly (they use colours on vertices whereas we use colours on edges). However,  while their algorithm itself is relatively simple, their proof of the algorithm's correctness is quite complex. The authors themselves write of their technique as depending on ``one very technical cancellation argument'' whose proof consists of a detailed case analysis spanning many pages. In light of this, we choose to prove our results by distilling the essential problem needed, and supply our own algorithm for it along with a comparatively simple and self-contained proof. The hope is to improve our understanding of why these algorithms work.

To be concrete, we define the {\SkCC} problem. 
You are given an undirected graph $G=(V,E)$, an edge $e\in E$, a target weight~$t$, an edge weight function $w\colon E\rightarrow \{0,1\}$, and a mapping $c\colon E\rightarrow \mathbb{N}$
associating a \emph{colour} with each edge.\footnote{This is an arbitrary labelling, rather than a proper vertex colouring in the sense of chromatic graph theory.}
We say that a cycle $C$ in $G$ is \emph{$k$-colourful of weight~$t$} iff there is a \emph{rainbow} subset $R(C)$ of size $k$ of the edges on $C$ such that all $c(uv)$ for $uv\in R(C)$ are unique, and the total weight $\sum_{uv\in R(C)} w(uv)$ is $t$.
The task is to compute the length of a shortest simple $k$-colourful cycle of weight~$t$ passing through~$e$.

The {\SkCC} problem unifies several well-studied graph problems.
For instance, given a set $\{e_1,\ldots, e_k\}\subseteq E$ of \emph{specified edges}, the mapping $c(e_i)= i$ for those edges (and $c(e)=k+1$ for all other edges) expresses the problem of finding a shortest simple cycle through the specified edges, sometimes called the {\sc Steiner Cycle} or {\sc $T$-Cycle} problem.
Such reductions can be found in \cite{FominGKSS2023_full}, which also traces the rich intellectual history of this problem. 

In the present paper, we use this problem to solve {\keLC}, which naturally corresponds to mapping every edge to a different colour. This is what \cite{FominGKSS2023_full} does for their version of the colourful cycle problem. 
Our main insight behind Theorems~\ref{thm: bip}~and~\ref{thm: main} is that there are more effective reductions from {\keLC} to {\SkpCC}
by an economical use of colours and $k'<k$. We can get some vertices on the sought long cycle almost for free by a more selective edge colouring.

Note that the size of a {\SkCC} need not be bounded by the parameter~$k$, so \emph{a priori} it is not clear that this problem admits an algorithm with running time $f(k)\operatorname{poly}(n)$, or even $n^{O(k)}$.
However, significant progress has been made in recent years, and the following theorem is implicit in the work of Fomin~\emph{et al.}~\cite{FominGKSS2023_full}:

\begin{theorem}[Shortest colourful cycle]
\label{thm: SkCC}
There is a Monte Carlo algorithm for the problem {\SkCC} in an $n$-vertex undirected edge-coloured graph running in $2^k\operatorname{poly}(n)$ time.
\end{theorem}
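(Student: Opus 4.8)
The plan is to obtain Theorem~\ref{thm: SkCC} essentially off the shelf from the algebraic machinery of~\cite{FominGKSS2023_full}, after two colour- and weight-preserving reductions. First I would eliminate the weights: since the sought rainbow set $R(C)$ has only $k$ edges, its total weight $t$ is one of $O(k)$ values, so rather than enumerating which colours are ``heavy'' I would carry a passive indeterminate $z$ through the whole computation, arranged so that a selected weight-$1$ edge contributes a factor $z$, and at the end read off the coefficient of $z^t$; the $z$-degree never exceeds $k$, so this costs only a $\operatorname{poly}(k)$ factor. Second I would turn edge colours into vertex colours by subdividing every edge $uv\neq e$ into a path $u\,m_{uv}\,v$ whose new degree-$2$ vertex $m_{uv}$ carries colour $c(uv)$ (and the $z$-marker when $w(uv)=1$), leaving the original vertices of $G$ uncoloured (or, if the formulation insists on a total colouring, giving them one common dummy colour forbidden among the $k$ selected). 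Simple cycles through $e$ in $G$ then correspond bijectively to simple cycles through $e$ in the new graph, with lengths related by a fixed affine formula, and a cycle is $k$-colourful of weight $t$ in $G$ exactly when the corresponding cycle passes through $k$ coloured vertices of pairwise distinct colours whose $z$-markers sum to $t$. This is the shortest colourful cycle problem on vertex-coloured graphs solved by~\cite[Theorem~1.4]{FominGKSS2023_full}, with the harmless addition of the bounded-degree indeterminate $z$.

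Now I would invoke their algorithm. Its engine is a sum, over closed walks of a fixed length that traverse the anchored edge $e$, of a monomial recording the colours of the coloured vertices the walk visits (with their $z$-markers) together with a random ``labelling''; working over $\mathrm{GF}(2^b)$ with $b=\Theta(k+\log n)$ and substituting random field values for the labelling part, one arranges that every non-simple walk is annihilated by a partner in characteristic two, so that what survives, with high probability, encodes exactly the simple cycles of that length and the multiset of colours each uses. Deciding whether some surviving cycle uses $k$ \emph{distinct} colours then amounts to detecting a squarefree degree-$k$ term in the colour variables; substituting those variables by elements of the group algebra of $\mathbb{Z}_2^{k}$ over $\mathrm{GF}(2^b)$ — a ring of dimension $2^{k}$ — kills the non-multilinear contributions while preserving a degree-$k$ one with constant probability, and this is precisely what produces the $2^{k}$ factor. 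The walk polynomial is evaluated by $\operatorname{poly}(n)$ ring operations (iterated matrix products), so one evaluation costs $2^{k}\operatorname{poly}(n)$; ranging over all target lengths up to $n$ (the size of the cycle is not bounded by $k$, so this enumeration is genuinely needed), extracting the $z^{t}$ coefficient, and amplifying the one-sided error add only further $\operatorname{poly}(n)$ and constant factors.

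The step I expect to be genuinely delicate — and it is exactly the one the present paper sets out to streamline in its later sections — is checking that the characteristic-two cancellation of non-simple walks is undisturbed by all the extra data we have bolted on: the coloured subdivision vertices, the marking of the $k$ rainbow edges, the indeterminate $z$, and the group algebra in which the squarefree-term sieve is carried out. Since the cancellation is a formal pairing of monomials that respects any characteristic-two coefficient ring, and since the reductions above change only $\operatorname{poly}(n)$ factors and preserve colours and weights, for Theorem~\ref{thm: SkCC} I would simply inherit this argument from~\cite{FominGKSS2023_full} (itself in the tradition of~\cite{Bjorklund2014}) rather than reprove it; the self-contained replacement is deferred to the colourful-cycle algorithm developed later in the paper.
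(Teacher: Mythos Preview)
Your route is valid---the paper itself notes in the introduction that Theorem~\ref{thm: SkCC} follows from \cite[Theorem~1.4]{FominGKSS2023_full} ``after minor modifications''---but it is not the route the paper actually takes. The paper's proof of Theorem~\ref{thm: SkCC} \emph{is} the new algorithm of Section~2 (there is nothing ``deferred to later''): rather than reduce to~\cite{FominGKSS2023_full} and inherit what the authors call a ``very technical cancellation argument'', it writes the answer as $\sum_{b\in\{0,1\}^k}\det M_b$ in the spirit of Wahlstr\"om's \textsc{$T$-Cycle} algorithm~\cite{Wahlstrom2013}, with a random $\{0,1\}$-matrix $A$ (Lemma~\ref{lem: random matrix rank}) playing the role of your group-algebra sieve, and parallel rainbow/extra arc pairs in an auxiliary multigraph realising the selection of $k$ edges; bad cycle covers are then killed by two short pairing lemmas (Lemmas~\ref{lem: at least k} and~\ref{lem: unicycle}) instead of a long case analysis. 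Your approach buys brevity by outsourcing the hard part; the paper's buys precisely the self-contained, simplified correctness argument that is its stated contribution. One caution about your sketch: taken literally, ``a monomial recording the colours of the coloured vertices the walk visits'' followed by ``detecting a squarefree degree-$k$ term'' would only catch cycles whose subdivision vertices \emph{all} carry distinct colours; to match the problem definition---and, crucially, to have the target weight $t$ attach to the \emph{selected} $k$ edges rather than to the whole cycle---you need an explicit selection mechanism, e.g.\ each coloured vertex contributing a factor $(1 + z^{w}\,y_c)$ before the degree-$k$ extraction. This is presumably among the ``minor modifications'' you allude to, but it is exactly where the parameter $t$ lives and deserves to be spelled out.
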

Fomin~\emph{et al.}~\cite{FominGKSS2023_full} also show a conditional lower bound by arguing that any $(2-\epsilon)^k\operatorname{poly}(n)$ time algorithm for some $\epsilon>0$ would imply an exponentially faster algorithm for the \textsc{Set Cover} problem.

We provide a new algorithm for {\SkCC} expressed as an exponential sum of determinants that runs in $2^k\operatorname{poly}(n)$ time and use only polynomial space in both $n$ and $k$. We analyse it to give a new proof of the above Theorem. In particular we give a relatively short proof of correctness of our algorithm compared to~\cite{FominGKSS2023_full}.

\subsection{Related Work}
Zehavi~\cite{Zehavi2016} showed that {\LC} can be solved in time $t(2k)\operatorname{poly}(n)$ where $t(k')\operatorname{poly}(n)$ is the time to compute a path of length exactly $k'$. In particular, the reduction is deterministic showing that {\LC} is not much harder than {\kC} even for deterministic algorithms. A very simple randomised reduction in the monograph by Cygan~\emph{et al.}~\cite{CyganFKLMPPS2015} (see exercise 5.8) shows this for randomised algorithms. Either there is a cycle of length in the range $[k,2k]$ which we can find by looking for each cycle length explicitly, or we can contract a random edge and repeat. The observation is that the contraction may at most halve the longest cycle length. 

Since \textsc{$k$-Path} and \textsc{$k$-Cycle} can be solved in $2^k\operatorname{poly}(n)$ time in directed graphs, see Williams~\cite{Williams2009}, and in $1.657^k\operatorname{poly}(n)$ time in undirected graphs, see Bj\"orklund \emph{et al.}~\cite{BjorklundHKK2017}, we get a $4^k\operatorname{poly}(n)$ and a $2.746^k\operatorname{poly}(n)$ time algorithm, respectively. 

The two reduction techniques above do not work for the {\keLC} problem though. Fomin~\emph{et al.}~\cite{FominGKSS2023_full} reduced the problem to  a colourful cycle problem, obtaining the running time $2^k\operatorname{poly}(n)$ for the undirected case to show that also {\keLC} is fixed parameter tractable.

In dense graphs, there are better algorithms for {\LC}. Fomin~\emph{et al.}~\cite{FominGSS2022} show that one can in $2^{O(k')}\operatorname{poly}(n)$ time find a cycle of length at least $\delta+k'$ when one exists, where $\delta$ is the average degree in the graph.

\section{The new algorithm for {\SkCC}}

In this Section we describe our new algorithm for {\SkCC} that we will use to prove Theorem~\ref{thm: SkCC}. Our algorithm as its predecessors is algebraic in nature as it computes a carefully designed polynomial in a random point to see if a cycle exists.
With foresight, we will work in a polynomial extension ring of a polynomial-sized field of characteristic~$2$.
Unlike Fomin \emph{et al.}~\cite{FominGKSS2023_full} whose algorithm counts labelled walks by dynamic programming,
our algorithm is an extension of Wahlstr\"om's algorithm for \textsc{$T$-Cycle}~\cite{Wahlstrom2013}, building on sums of determinants. We also use linear algebra ideas tracing back at least to Koutis's work on \textsc{$k$-Path} ~\cite{Koutis2008} to ensure that $k$ different edge colours are used. 

\subsection{Preliminaries and Notation}

\subsubsection{Graphs.}
For a graph $G$ we write $V(G)$ for its vertex set and $E(G)$ for its edge set. 
Undirected edges are denoted by their end-points $u$ and $v$ as $uv$.
We refer to directed edges as `arcs' and use the notation $(u,v)$ with $u$ the arc's initial vertex and $v$ its terminal vertex. 
In a multigraph, when we need to distinguish two arcs with the same endpoints, we sometimes use angled brackets, as in  $\langle u, v\rangle$, for an arc from $u$ to $v$ different from $(u,v)$.

\subsubsection{Finite fields.}

We write $\operatorname{GF}(2^\kappa)$ for the finite field of $2^{\kappa}$ elements, also known as the Galois field and sometimes denoted $\mathbf F_{2^\kappa}$.
This structure exists for every positive integer~$\kappa$ and has $2^{\kappa}$ elements; arithmetic is performed in time polynomial in~$\kappa$, which is negligible for the present paper.
The field $\operatorname{GF}(2^\kappa)$ has \emph{characteristic~$2$}, which means that $e+e = 0$ for every element~$e$.

\subsubsection{Polynomials.}
For a field $\mathbf F$ we write $\mathbf F[\poly X_1,\ldots, \poly X_t]$ for the ring of multivariate polynomials with coefficients from $\mathbf F$.
The \emph{zero polynomial} is the unique polynomial with all coefficients equal to~$0$.
The terms of a polynomial are called monomials.
The (total) degree of a non-zero polynomial is the maximum sum of its monomials' exponents; \emph{e.g.}, the total degree of $\poly X^2\poly Y$ is~$3$.
Our notation distinguishes polynomials and their evaluation;
we denote polynomials by single letters such as $g$ (which is a polynomial in $\mathbf F[\poly X]$) and denote their evaluation at $x$ using an expression with brackets, like $g(x)$ (which is a value in $\mathbf F$).
If $h$ is a monomial in $g$ we write $[h]g$ for the coefficient of $h$ in $g$.

For instance, $g=\poly X^2+\poly X$ is a univariate polynomial of degree~$2$ in $\operatorname{GF}(2)[\poly X]$.
It is not the zero polynomial, even though $g(x)=0$ for all $x\in\operatorname{GF}(2)$.
The coefficient $[\poly X^2]g$ is $1$; coefficient extraction notation extends to multivariate polynomials, so if $g'= \poly X^2\poly Y+\poly X^2$ then $[\poly X^2]g' = \poly Y+1$.

\subsubsection{Matrices and Rank.}
An $r\times k$ matrix $A$ is given in terms of its entries $(a_{ij})$.
We write row vectors as $a\transpose = (a_1,\ldots, a_k)\transpose$ and write the \emph{dot product} of vectors $a$ and $b$ as $a\transpose b = a_1b_1+\cdots a_kb_k$.
A square matrix of $k$ row vectors $a_1\transpose,\ldots, a_k\transpose$ and a vector~$y$ describe a nonhomogenuous systems of $k$ linear equations written as $Ab=y$.
The \emph{rank}~$\rk(A)$ of~$A$ is the number of linearly independent row vectors;
when $\rk(A)=k$ then $A$ is said to have \emph{full rank}.

\subsubsection{Cycle Covers and the Permanent}

Let $D$ be a directed multigraph that may contain loops.
We say that a spanning subgraph $K$ forms a \emph{cycle cover} (less misleadingly a cycle \emph{partition}) if every vertex in $K$ has exactly one outgoing arc and exactly one incoming arc in $E(K)$.
(This can be the same arc in case of a loop.)
A cycle cover can be viewed as a spanning vertex-disjoint union of directed cycles from $D$.

The permanent of an $n\times n$ matrix $M = (m_{ij})$ is 
\begin{equation}\label{eq: per def} 
  \per M = \sum_{\sigma} \prod_{i=1}^n m_{i,\sigma(i)}\,,
\end{equation}
where the sum is over all permutations $\sigma$ on $\{1,\ldots,n\}$.

We need to extend the well-known connection between permanents and cycle covers of \emph{simple} graphs to multigraphs:

\begin{lemma}
\label{lem: cc2per}
  Let $D$ be directed multigraph with loops at every vertex and set $V(D)= \{v_1,\ldots, v_n\}$.
  Assume the arcs $e\in E(D)$ are labelled with values~$\lambda(e)$ from a commutative ring.
  Let $\mathscr C$ denote the family of cycle covers of~$D$.
  Consider the $n\times n$ matrix~$M= (m_{ij})$ whose $ij$th entry is the sum of the labels of arcs from $v_i$ to $v_j$, formally
  \[ m_{ij} = \sum_{\substack{\operatorname{init}(e) = v_i\\\operatorname{ter}(e) = v_j}} \lambda(e)\,.\]
  Then 
  \[ \per M = \sum_{K\in \mathscr C} \prod_{e\in E(K)} \lambda(e)\,.\]
\end{lemma}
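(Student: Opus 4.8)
The plan is to expand the permanent according to its definition and reorganise the sum so that it groups terms by cycle cover. Recall that $\per M = \sum_\sigma \prod_{i=1}^n m_{i,\sigma(i)}$, and each entry $m_{i,\sigma(i)} = \sum_{e\colon \operatorname{init}(e)=v_i,\ \operatorname{ter}(e)=v_{\sigma(i)}} \lambda(e)$ is itself a sum over parallel arcs. First I would substitute these inner sums into the product and use distributivity to write $\prod_{i=1}^n m_{i,\sigma(i)} = \sum_{(e_1,\ldots,e_n)} \prod_{i=1}^n \lambda(e_i)$, where the sum ranges over all $n$-tuples of arcs such that $e_i$ goes from $v_i$ to $v_{\sigma(i)}$. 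Every vertex has a loop, so such a tuple always exists; note also that here we use that the ring is commutative so that the product is well-defined regardless of the order of the factors.

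Next I would observe that specifying a permutation $\sigma$ together with such an arc-tuple $(e_1,\ldots,e_n)$ is exactly the same data as specifying a spanning subgraph $K$ in which each vertex $v_i$ has exactly one outgoing arc, namely $e_i$. The permutation condition $\operatorname{ter}(e_i)=v_{\sigma(i)}$ with $\sigma$ a bijection forces each vertex to also have exactly one incoming arc, so $K$ is precisely a cycle cover; conversely every cycle cover arises this way from a unique $\sigma$ (the one sending $i$ to the index of the head of $v_i$'s outgoing arc) and a unique arc-tuple. Hence summing over $(\sigma, (e_1,\ldots,e_n))$ is the same as summing over $K\in\mathscr C$, and $\prod_{i=1}^n\lambda(e_i) = \prod_{e\in E(K)}\lambda(e)$, giving the claimed identity.

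The one genuinely delicate point, and the step I would treat most carefully, is the bijection between (permutation, arc-tuple) pairs and cycle covers in the presence of loops and parallel arcs. With loops, a vertex $v_i$ with $\sigma(i)=i$ contributes a loop to $K$, and the definition in the excerpt explicitly allows the single outgoing and single incoming arc at such a vertex to coincide, so this is consistent. With parallel arcs, two distinct tuples can induce the same $\sigma$ but correspond to genuinely different subgraphs $K$ (different choices among parallel arcs), which is why the right-hand side sums over cycle covers as subgraphs rather than over permutations — and this is exactly the multigraph generalisation we need beyond the classical simple-graph statement. I would state this correspondence as a short combinatorial claim and verify both directions explicitly. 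Everything else is routine manipulation of finite sums in a commutative ring.
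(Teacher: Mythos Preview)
Your proof is correct and takes a different route from the paper's. The paper argues by reduction to the simple-graph case: it repeatedly picks a pair $(u,v)$ with multiple parallel arcs $e_1,\ldots,e_m$, partitions the cycle covers according to which (if any) of these arcs they use, and shows that the label sum is unchanged if one replaces $e_1,\ldots,e_m$ by a single arc labelled $\lambda(e_1)+\cdots+\lambda(e_m)$; iterating until the graph is simple, it then invokes the classical permanent--cycle-cover correspondence. Your argument instead expands the permanent directly, distributes each product of sums over parallel arcs, and identifies every resulting term with a cycle cover via the bijection between (permutation, arc-tuple) pairs and cycle covers. Your approach is shorter and handles the multigraph generalisation in one stroke; the paper's has the pedagogical advantage of isolating the multigraph complication as a separate reduction layered on top of the well-known simple-graph statement. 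One small quibble: your aside that ``such a tuple always exists'' because of loops is not quite accurate for arbitrary $\sigma$ (it fails whenever some $m_{i,\sigma(i)}=0$), but this is harmless since those permutations contribute zero to the permanent anyway.
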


The proof is routine and can be found in the Appendix. 
The computational importance of the above lemma is that in a ring of characteristic~$2$, the permanent coincides with the determinant, which has efficient polynomial time algorithms.
 
\subsection{Overview}
We are given as input an undirected, loopless, simple graph $G$ on $n$ vertices and $m$ edges, a specified edge~$v_1v_2\in E(G)$, an edge colouring $c\colon E(G)\rightarrow \mathbb{N}$, and an edge weighting $w\colon E(G)\rightarrow \{0,1\}$. 
Let $s$ be the size of the image of $c$, and remap $c$ if needed to the integers $1,2,\cdots, s$.
The reader loses little by setting $w(e)=0$ at first reading.

With foresight, we consider the field $\mathbf{F}=\operatorname{GF}(2^{1+\lceil\log_2 n\rceil})$
and the ring of polynomials 
\[
  \mathbf G =
  \mathbf F[\{\poly R_{uv}\}, \{\poly X_{uv}\}, \poly W, \poly Y, \poly Z]
\]
of $2m+2$ indeterminates.
There is an indeterminate $\poly R_{uv}$ for every $uv\in E(G)$,
an indeterminate $\poly X_{uv}$ for every $uv\in E(G)$ except $v_1v_2$,
and $3$~more indeterminates $\poly W$,  $\poly Y$ and $\poly Z$.

Let $A$ be an $s\times k$ matrix consisting of $\{0,1\}$-valued row vectors:
\[ A = \begin{bmatrix}
  a_1\transpose\\\vdots\\a_s\transpose
\end{bmatrix},\quad \text{where } a_i\in \{0,1\}^k\,.\]
With foresight, the role of the \emph{filter bit vector} $b\in \{0,1\}^k$ is to orient the $k$ chosen arcs of different colours in all $2^k$ possible ways, which will make sure that only a single cycle containing all of them will contribute to the sum (this mimics the idea for \textsc{$T$-Cycle} by Wahlstr\"om~\cite{Wahlstrom2013}).
The expression $\lambda_{b,A}(K)$ associates with each cycle cover $K\in \mathscr C$ a monomial from $\mathbf G$.
We then define the polynomial $g\in \mathbf G$ of total degree $n$
\begin{equation}
\label{eq: def g}
  g = \sum_{K \in \mathscr C} \sum_{b\in \{0,1\}^k} \lambda_{b,A}(K)
\end{equation}
where $\mathscr C$ is the family of cycle covers in a related directed multigraph $G'$, described in the following subsection.
With large probability for random~$A$, the polynomial $g$ will be non-zero when $G$ contains a $k$-colourful cycle through~$v_1v_2$.

\medskip
We begin by describing $G'$ and $\lambda_{b,A}$.

\subsection{A Directed Graph Labelled With Indeterminates}
\label{sec: lambda def}

From $G$ and the fixed edge $v_1v_2\in E(G)$ we define a labelled multigraph $G'$ with self-loops as follows.
The vertices of $G'$ are the same as $G$.
We impose an arbitrary total order on the vertices, so it makes sense to write $u<v$ for $u,v \in V$.

Each arc $e\in E(G')$ belongs to one of three sets (called self-loops, rainbow arcs, and extra edges) and is labelled with a multilinear monomial $\lambda_{b,A}(e)\in \mathbf G$ that may depend on $b\in\{0,1\}^k$ and $A\in \{0,1\}^{s\times k}$:
\begin{description}
  \item[Self-loops] 
    There are $n-2$ self-loops $(u,u)$, one for every $u\in V$ except $v_1$ and $v_2$.
    Each self-loop is labelled by the same indeterminate $\lambda_{b,A}((u,u)) = \poly Z$.
  \item[Rainbow arcs] 
    There is a set $R(G')$ of $2m-1$ directed \emph{rainbow} arcs $(u,v)$ and $(v,u)$, one pair for each $uv\in E(G)$, except for $(v_2,v_1)$.
    The label of rainbow arc~$(u,v)$ contains a \emph{rainbow} indeterminate $\poly R_{uv}$ and depends on the arc's orientation, the row of $A$ corresponding to the colour $c(uv)$ of $uv$ in $G$, and the vector~$b$:
    Rainbow arcs corresponding to edges of weight~$1$ also contribute the \emph{weight} intermediate~$\poly W$.
    \begin{equation}\label{eq: rainbow label}
      \lambda_{b,A}((u,v)) = \begin{cases}
	(a_{c(uv)}\transpose b) \poly R_{uv}\poly W^{w(uv)}\,, & \text{if $u < v$}\,;\\
	(1 + a_{c(uv)}\transpose b) \poly R_{uv} \poly W^{w(uv)}\,, & \text{if $u > v$}\,.
      \end{cases}
    \end{equation}
    Recall that computation is in characteristic~$2$, so $a_{c(uv)}\transpose b$ is either $0$ or~$1$.
    Also note that the antiparallel pair $(u,v)$ and $(v,u)$ of rainbow arcs use the same indeterminate, and that exactly one of $\lambda_{b,A}((u,v))$ and $\lambda_{b,A}((v,u))$  is $\poly R_{uv}\poly W^{w(uv)}$ (the other is $0$).
  \item[Extra arcs] 
    There is a set $X(G')$ of $2(m-1)$ \emph{extra} arcs $\langle u,v\rangle$ and $\langle v,u\rangle$, one pair for each $uv\in E(G)\setminus\{_1v_2\}$.
    The label~$\lambda_{b,A}(\langle u,v\rangle)$ is the bivariate monomial $\poly X_{uv}\poly Y$ and does not depend on $b$ or~$A$.
    Note that (unlike rainbow arcs), the antiparallel extra arcs $\langle u,v\rangle$ and $\langle v,u\rangle$ have the same label.
\end{description}
For some minimal examples, if $G$  is
\(
\vcenter{
  \hbox{
    \begin{tikzpicture}[scale=.8]
    \node (1) [circle, draw, inner sep=1pt] { \sf\small 1};
    \node (2) at (1,0) [circle, draw, inner sep=1pt] { \sf\small 2};
    \node (3) at (2,0) [circle, draw, inner sep=1pt] { \sf\small 3};
    \draw (1) -- (2) -- (3);
  \end{tikzpicture}
  }
}
\)
coloured as $c(12)=1$, $c(23)=2$, weighted as $w(12)=0$, $w(23)=1$,
and $A=\begin{pmatrix} 1&0\\0&1\end{pmatrix}$, then we have the labelings
  \[
    \vcenter{\hbox{
    \begin{tikzpicture}
      \node (1) [circle, draw, inner sep=1pt] {\sf \small 1};
      \node (2) at (1.3,0) [circle, draw, inner sep=1pt] {\sf \small 2};
      \node (3) at (3.3,0) [circle, draw, inner sep=1pt] {\sf \small 3};
      \draw [-stealth] (1) to node [inner sep=0pt, fill=white] {$\poly R_{12}$} (2);
      \draw [-stealth] (2) [bend left=20] to node [inner sep=0pt, fill=white] {$\poly X_{23}\poly Y$} (3);
      \draw [-stealth] (3) [bend right=60] to node [inner sep=0pt, fill=white] {$\poly X_{23}\poly Y$} (2);
      \draw [-stealth] (2) [bend left=-20] to node [inner sep=0pt, fill=white] {$0$} (3);
      \draw [-stealth] (3) [bend right=-60] to node [inner sep=0pt, fill=white] {$\poly R_{23}\poly W$} (2);
      \draw [-stealth] (3) [loop right] to node [above] {$\poly Z$} (3);
    \end{tikzpicture}}}
    \text{ for } b=\begin{pmatrix}1\\0\end{pmatrix}
      \quad
    \text{ and }
      \quad
    \vcenter{\hbox{
    \begin{tikzpicture}
      \node (1) [circle, draw, inner sep=1pt] {\sf \small 1};
      \node (2) at (1.3,0) [circle, draw, inner sep=1pt] {\sf \small 2};
      \node (3) at (3.3,0) [circle, draw, inner sep=1pt] {\sf \small 3};
      \draw [-stealth] (1) to node [inner sep=0pt, fill=white] {$\poly R_{12}$} (2);
      \draw [-stealth] (2) [bend left=20] to node [inner sep=0pt, fill=white] {$\poly X_{23}\poly Y$} (3);
      \draw [-stealth] (3) [bend right=60] to node [inner sep=0pt, fill=white] {$\poly X_{23}\poly Y$} (2);
      \draw [-stealth] (2) [bend left=-20] to node [inner sep=0pt, fill=white] {$\poly R_{23}\poly W$} (3);
      \draw [-stealth] (3) [bend right=-60] to node [inner sep=0pt, fill=white] {$0$} (2);
      \draw [-stealth] (3) [loop right] to node [above] {$\poly Z$} (3);
    \end{tikzpicture}}}
    \text{ for } b=\begin{pmatrix}1\\1\end{pmatrix}\,.
  \]

We extend the labels from arcs to arbitrary subgraphs $H\subseteq G'$ in the natural fashion:
\begin{equation}\label{eq: lambda def}
  \lambda_{b,A}(H) = \prod_{e\in E(H)} \lambda_{b,A}(e)\,.
\end{equation}
\subsection{A Matrix of Weighted Polynomials}

Consider the $n\times n$ matrix $M_b= (m_{u,v})$ whose entries are multilinear polynomials $m_{u,v}\in\mathbf G$ given by 
\begin{equation}\label{eq: M def}
  m_{u,v} = \begin{cases}
    \poly X_{uv}\poly Y + (a_{c(uv)}\transpose b) \poly  R_{uv} \poly W^{w(uv)} \,, & u<v, uv\in E(G)\setminus v_1v_2\,;\\
    \poly X_{uv}\poly Y + (1+a_{c(uv)}\transpose b) \poly  R_{uv}  \poly W^{w(uv)}\,, & u>v, uv\in E(G)\setminus v_1v_2\,;\\
    (a_{c(uv)}\transpose b) \poly R_{uv} \poly W^{w(uv)}\,, & (u,v) = (v_1,v_2);\\
    \poly Z\,, & u=v \,;\\
    0\,, & (u,v) = (v_2, v_1) \text{ or } uv\notin E(G)\,.\\
  \end{cases}
\end{equation}

This allows an alternative expression for $g$, in terms of the permanent:

\begin{lemma}\label{lem: h}
  \[
   \sum_{b\in \{0,1\}^k} \per M_b =  g\,.
 \]
\end{lemma}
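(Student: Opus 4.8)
The plan is to prove, for each fixed filter vector $b\in\{0,1\}^k$, the identity $\per M_b=\sum_{K\in\mathscr C}\lambda_{b,A}(K)$, and then to obtain the lemma by summing this over the $2^k$ choices of $b$ and exchanging the two finite sums, which is precisely the definition~\eqref{eq: def g} of $g$. The per-$b$ identity is nothing but Lemma~\ref{lem: cc2per} applied to the directed multigraph $G'$ of Section~\ref{sec: lambda def}, with ambient commutative ring the polynomial ring $\mathbf G$ and arc labelling $e\mapsto\lambda_{b,A}(e)$; for fixed $b$ and $A$ this is a well-defined map $E(G')\to\mathbf G$, each value being a monomial or $0$.

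So the only thing to check is that $M_b$ \emph{is} the matrix that Lemma~\ref{lem: cc2per} associates with $(G',\lambda_{b,A})$, i.e.\ that for every ordered pair $(u,v)$ the entry $m_{u,v}$ of~\eqref{eq: M def} equals $\sum_e\lambda_{b,A}(e)$, the sum over all arcs $e$ of $G'$ with $\operatorname{init}(e)=u$, $\operatorname{ter}(e)=v$. This is a finite case check against the three arc classes of Section~\ref{sec: lambda def}. If $u\ne v$ and $uv\in E(G)\setminus\{v_1v_2\}$, there is exactly one rainbow arc from $u$ to $v$, with the orientation-dependent label of~\eqref{eq: rainbow label}, and exactly one extra arc $\langle u,v\rangle$, with label $\poly X_{uv}\poly Y$, so the sum is the first or second line of~\eqref{eq: M def} according as $u<v$ or $u>v$. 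Taking the (arbitrary) total order with $v_1<v_2$, the only arc from $v_1$ to $v_2$ is the rainbow arc $(v_1,v_2)$, which by~\eqref{eq: rainbow label} carries the label $(a_{c(v_1v_2)}\transpose b)\poly R_{v_1v_2}\poly W^{w(v_1v_2)}$ (there is no extra arc for $v_1v_2$), giving line~3, while no arc runs from $v_2$ to $v_1$ — the rainbow arc $(v_2,v_1)$ is omitted and $v_1v_2$ contributes no extra arc — so $m_{v_2,v_1}=0$. When $u\ne v$ and $uv\notin E(G)$ there are no arcs from $u$ to $v$, so $m_{u,v}=0$. Finally, on the diagonal $m_{u,u}=\poly Z$; this corresponds to a $\poly Z$-labelled self-loop at $u$, and in order that the hypothesis of Lemma~\ref{lem: cc2per} (a loop at every vertex) hold and that $m_{v_1,v_1}=m_{v_2,v_2}=\poly Z$, we apply that lemma to $G'$ with $\poly Z$-loops adjoined at $v_1$ and $v_2$, understanding $\mathscr C$ to range over the cycle covers of the resulting loop-complete multigraph.

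With this identification, Lemma~\ref{lem: cc2per} gives, for each $b$,
\[
  \per M_b=\sum_{K\in\mathscr C}\ \prod_{e\in E(K)}\lambda_{b,A}(e)=\sum_{K\in\mathscr C}\lambda_{b,A}(K)
\]
by~\eqref{eq: lambda def}, and hence
\[
  \sum_{b\in\{0,1\}^k}\per M_b=\sum_{b\in\{0,1\}^k}\sum_{K\in\mathscr C}\lambda_{b,A}(K)=\sum_{K\in\mathscr C}\sum_{b\in\{0,1\}^k}\lambda_{b,A}(K)=g
\]
by~\eqref{eq: def g}, as claimed.

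The argument carries no real difficulty beyond the bookkeeping in the middle step; the points needing care are matching the orientation convention of~\eqref{eq: rainbow label} with the $u<v$/$u>v$ split in~\eqref{eq: M def}, handling the edge $v_1v_2$ (a rainbow arc only in the direction $v_1\to v_2$, and no extra arc), and treating $v_1,v_2$ as carrying $\poly Z$-self-loops so that Lemma~\ref{lem: cc2per} applies and the diagonal of $M_b$ is uniformly $\poly Z$. The exchange of summations is purely formal, all sums being finite over the commutative ring $\mathbf G$.
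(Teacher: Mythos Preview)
Your proof is correct and follows exactly the paper's approach: verify that $M_b$ is the matrix Lemma~\ref{lem: cc2per} associates with $(G',\lambda_{b,A})$, apply that lemma for each $b$, then sum over $b$ and swap the two finite sums. Your explicit case check and your remark about adjoining $\poly Z$-loops at $v_1,v_2$ (so that the uniform diagonal $m_{u,u}=\poly Z$ and the loop hypothesis of Lemma~\ref{lem: cc2per} both hold) are in fact more careful than the paper's one-line ``observe,'' which glosses over this bookkeeping point.
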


\begin{proof}
  Observe that for each $u,v$, the entry $m_{u,v}$ is exactly the sum of the labels $\lambda_{b,A}(e)$ of all arcs $e$ from $u$ to $v$ as defined in Section~\ref{sec: lambda def}.
  Thus, Lemma~\ref{lem: cc2per} gives
  \[ 
  \sum_{b\in\{0,1\}^k} \per M_b = 
  \sum_{b\in \{0,1\}^k} \sum_{K\in\mathscr C} \prod_{e\in C} \lambda_{b,A}(e)  = 
  \sum_{K\in\mathscr C}
  \sum_{b\in \{0,1\}^k}  \lambda_{b,A}(K) = 
 g\,.\qedhere
  \]
\end{proof}

\subsection{Algorithm for {\SkCC}}
\label{sec: algorithm}

We are now ready to state our algorithm for \SkCC.
Put very succinctly, with $m_{u,v}=m_{u,v}^{(b,A)}$ we just need to compute the smallest $i$ such that
\[
  [\poly W^t \poly Y^{i-k}\poly Z^{n-i}] \sum_{b\in \{0,1\}^k}\per \bigl((m_{u,v}^{(b,A)})(\{r_{uv}\}, \{x_{uv}\})\bigr) \neq 0
\]
at random $\{r_{uv}\}$, $\{x_{uv}\}$ and $A\in\{0,1\}^{s\times k}$.
Here is a more careful explanation of the procedure:

\medskip\noindent
{\bf Algorithm A}. 
The input is a graph $G$, an edge colouring $c:E(G)\rightarrow \{1,\ldots, s\}$, an edge weight function $w:E(G)\rightarrow \{0,1\}$, with specified edge $v_1v_2$, and integers $k$ and $t$.
This algorithm returns the smallest $l$ for which $G$ contains a $k, v_1v_2$-colourful cycle of length~$l$ and weight~$t$.

\begin{description}
  \item[A1] (\emph{Pick random $A$.}) Generate the $s\times k$ matrix $A=(a_{ij})$ by picking $a_{ij}\in\{0,1\}$ uniformly and independently at random.
  \item[A2] (\emph{Pick random sieving weights.}) Pick $2m-1$ values $r_{uv}, x_{uv}\in \mathbf F$ uniformly and independently at random.
  \item[A3] (\emph{Evaluate entries of $M_b$.})
    For each $b\in\{0,1\}^k$, construct matrix $M_b'$ whose $(u,v)$th entry is the bivariate polynomial $m_{u,v}(\{r_{uv}\}, \{x_{uv}\})\in \mathbf F [\poly W, \poly Y, \poly Z]$ as defined in Eq.~\eqref{eq: M def}, where the indeterminates $\poly R_{uv}$ and $\poly X_{uv}$ are replaced by the values $r_{uv}$ and  $x_{uv}$, respectively.
  \item[A4] (\emph{Compute sum of permanents.}) 
    Compute the polynomial $h\in \mathbf F[\poly W, \poly Y, \poly Z]$ as \[h=\sum_{b\in\{0,1\}^k} \per M_b'\,.\]
  \item[A5] 
    (\emph{Determine length of shortest contributing cycle.})
    If $[\poly W^t]h$ is the zero polynomial, return $\infty$.
    Otherwise return the smallest $i$ for which the coefficient $[\poly W^t\poly Y^{i-k}\poly Z^{n-i}]h$ is not zero.
\end{description}

Think of $h$ as $g(\{r_{uv}\}, \{x_{uv}\})$, \emph{i.e.}, an evaluation of $g$ at random sieving weights.
Formally, the entries of matrix $M_b'$ are $m_{u,v}(\{r_{uv}\}, \{x_{uv}\})$, so Lemma~\ref{lem: h} gives
\begin{equation}\label{lem: h as g}
  h = 
  \sum_{b\in\{0,1\}^k} \per M_b' = 
  \left(\sum_{b\in\{0,1\}^k} \per M_b\right)(\{r_{uv}\}, \{x_{uv}\}) = 
  g(\{r_{uv}\}, \{x_{uv}\})\,.
\end{equation}
Algebraically, it does not matter whether the substitution of values $r_{uv}$ and $x_{uv}$ for the indeterminates $\poly R_{uv}$ and $\poly X_{uv}$ happens before or after the permanent is computed, but computationally it makes all the difference because the polynomials in $\mathbf G$ can have exponentially many terms.

\begin{lemma}[Running time]
  Algorithm {\bf A} runs in time $O(2^k\operatorname{poly}(n))$.
\end{lemma}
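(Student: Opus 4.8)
The plan is to bound the cost of each of the five steps of Algorithm~{\bf A}, with the dominant contribution coming from Step~{\bf A4}, which performs one permanent computation for each of the $2^k$ choices of the filter vector $b$. First, Step~{\bf A1} samples an $s\times k$ matrix over $\{0,1\}$; since $s\le m$ is polynomial in $n$, this is $\operatorname{poly}(n)$ time. Step~{\bf A2} samples $2m-1$ field elements; each element of $\mathbf F=\operatorname{GF}(2^{1+\lceil\log_2 n\rceil})$ has $O(\log n)$ bits, so this is again $\operatorname{poly}(n)$.

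For Step~{\bf A3}, fix $b\in\{0,1\}^k$. Each entry $m_{u,v}(\{r_{uv}\},\{x_{uv}\})$ of $M_b'$ is obtained from Eq.~\eqref{eq: M def} by substituting the sampled values $r_{uv},x_{uv}$; evaluating the dot products $a_{c(uv)}\transpose b$ costs $O(k)$ field operations, and there are $n^2$ entries, so constructing one $M_b'$ costs $\operatorname{poly}(n)$ time. Here I would remark that each entry is a polynomial in $\mathbf F[\poly W,\poly Y,\poly Z]$ with a \emph{constant} number of monomials (at most two, with $\poly Y$-degree at most~$1$, $\poly W$-degree at most~$1$, and $\poly Z$-degree at most~$1$), which keeps the subsequent arithmetic cheap.

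The heart of the argument is Step~{\bf A4}. For each $b$ we must compute $\per M_b'$. Since $\mathbf F$ has characteristic~$2$, the permanent equals the determinant, so $\per M_b' = \dt M_b'$ can be computed by Gaussian elimination. The subtlety is that the entries are not field elements but polynomials in $\mathbf F[\poly W,\poly Y,\poly Z]$; however, the permanent (equivalently, $g$ restricted to these three indeterminates) has total degree at most $n$, and in fact $\poly Z$-degree at most $n$, $\poly Y$-degree at most $n$, and $\poly W$-degree at most $n$ as well. Hence one can work in the quotient ring $\mathbf F[\poly W,\poly Y,\poly Z]/(\poly W^{n+1},\poly Y^{n+1},\poly Z^{n+1})$, in which every element is represented by $O(n^3)$ coefficients and multiplication costs $\operatorname{poly}(n)$ field operations; division is avoided by computing the determinant via the division-free algorithms of Mahajan--Vinay or Berkowitz, or simply by fraction-free Bareiss-style elimination, each of which uses $\operatorname{poly}(n)$ ring operations. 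Thus one determinant costs $\operatorname{poly}(n)$ time, and summing over the $2^k$ vectors $b$ and accumulating into $h$ costs $O(2^k\operatorname{poly}(n))$ time in total. The main obstacle to watch out for here is precisely this point: one must make sure the polynomial entries do not blow up, which is handled by the a~priori degree bound of $n$ on $g$ and by using a division-free determinant routine so that no spurious denominators appear.

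Finally, Step~{\bf A5} inspects the coefficients $[\poly W^t\poly Y^{i-k}\poly Z^{n-i}]h$ for the $O(n)$ relevant values of $i$; since $h$ is already stored as an explicit list of $O(n^3)$ coefficients, this is $\operatorname{poly}(n)$ time. Adding up the five steps gives the claimed $O(2^k\operatorname{poly}(n))$ running time. I would also note in passing that the space used is polynomial in both $n$ and $k$: we never materialise the exponential sum defining $g$, only one $M_b'$ and the running accumulator $h$ at a time.
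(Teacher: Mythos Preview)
Your proof is correct and follows essentially the same approach as the paper: both reduce to $2^k$ determinant computations over $\mathbf F[\poly W,\poly Y,\poly Z]$ using a division-free algorithm such as Berkowitz, with the polynomial degree bounded polynomially in~$n$. One small inaccuracy: the total degree of $g$ (and hence of $\per M_b'$) is at most $2n$, not $n$---the paper states this explicitly---but this does not affect your argument, since your per-variable degree bounds already suffice for the $\operatorname{poly}(n)$ representation.
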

\begin{proof}
  The polynomials in  $\mathbf F[\poly W,\poly Y, \poly Z]$ are represented by their coefficients;
  the maximum total degree is~$2n$. 
  In characteristic~$2$, the permanent equals the determinant, so in step {\bf A3} we can use
  any division free polynomial time determinant algorithm that works over a commutative ring with unity, \emph{e.g.}, the one by Berkowitz~\cite{Berkowitz1984}.
  There are $2^k$ such computations.
\end{proof}

The correctness proof requires a thorough investigation of the properties of $g$ and takes up the rest of this section.

\subsection{Properties of Labelled Cycle Covers}

\subsubsection{Label Sums}

We will now consider the total contribution
\[\Lambda_A(H) = \sum_{b\in\{0,1\}^k} \lambda_{b,A} (H) \,\]
of a subgraph $H$ over all filter bit vectors $b$, so that from~\eqref{eq: def g} we have $g=\sum_{K\in\mathscr C} \Lambda_A(K)$.

It will be useful to introduce the $r\times k$ matrix $A_H$ formed by those rows $a_{c(uv)}\transpose$ of $A$ that correspond to the colours on the $r$ rainbow arcs $R(H)$ of $H$.
To be precise, let $R(H)=\{e_1,\ldots, e_r\}$ be an arbitrary enumeration of the rainbow arcs of $H$,
for definiteness we let the enumeration agree with the ordering on the initial vertices of each arc.
Then we let
\begin{equation}
\label{eq: A_H}
  A_H = \begin{pmatrix}
  a_{c(e_1)}\transpose \\
  \vdots\\
  a_{c(e_r)}\transpose
\end{pmatrix}
  \,.
\end{equation}
Note that $A_H$ can have repeated rows.

We can give a concise expression for the contribution $\Lambda_A(H)$ of subgraphs with $k$~rainbow arcs:

\begin{lemma}[Label sum for subgraphs]
\label{lem: labelsum}
  Let $H$ be a subgraph consisting of $k$ rainbow arcs $R(H)$ of total weight~$t$, as well as $i-k$ extra arcs $X(H)$ and $n-i$ self-loops.
  Then 
  \begin{equation}
  \label{eq: labelsum}
    \Lambda_A(H) =
    2^{k - \rk A_H}
  \left(\prod_{(u,v)\in R(H)} \poly R_{uv} \right)
  \left(\prod_{\langle u, v\rangle\in X(H)} \poly X_{uv} \right)
    \poly W^t
    \poly Y^{i - k} 
    \poly Z^{n-i} \,.
\end{equation}
\end{lemma}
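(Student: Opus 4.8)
The plan is to expand the product defining $\lambda_{b,A}(H)$ over the three types of arcs of $H$, observe that only the rainbow arcs carry any dependence on $b$, and then evaluate the residual sum over $b$ as a count of solutions of a linear system modulo~$2$.

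First I would separate the factors. By~\eqref{eq: lambda def}, $\lambda_{b,A}(H)=\prod_{e\in E(H)}\lambda_{b,A}(e)$. The $n-i$ self-loops each contribute $\poly Z$, and the $i-k$ extra arcs $\langle u,v\rangle\in X(H)$ each contribute $\poly X_{uv}\poly Y$; none of these factors depend on $b$ or $A$. For the $k$ rainbow arcs, enumerate $R(H)=\{e_1,\ldots,e_k\}$ as in~\eqref{eq: A_H}, write $e_j$ for the arc from $u_j$ to $v_j$, and set $\beta_j(b)=a_{c(e_j)}\transpose b$ if $u_j<v_j$ and $\beta_j(b)=1+a_{c(e_j)}\transpose b$ if $u_j>v_j$ (with the specified arc $(v_1,v_2)$ falling in the former case, per~\eqref{eq: M def}). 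By~\eqref{eq: rainbow label}, $e_j$ contributes $\beta_j(b)\,\poly R_{u_jv_j}\poly W^{w(e_j)}$. Multiplying everything and using $\sum_j w(e_j)=t$ gives
\[
  \lambda_{b,A}(H)=\Bigl(\prod_{j=1}^k\beta_j(b)\Bigr)\Bigl(\prod_{(u,v)\in R(H)}\poly R_{uv}\Bigr)\Bigl(\prod_{\langle u,v\rangle\in X(H)}\poly X_{uv}\Bigr)\poly W^t\poly Y^{i-k}\poly Z^{n-i}.
\]

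Next I would sum over $b$. Every factor except $\prod_j\beta_j(b)$ is independent of $b$, so $\Lambda_A(H)$ equals the monomial displayed in~\eqref{eq: labelsum} times the scalar $N:=\sum_{b\in\{0,1\}^k}\prod_{j=1}^k\beta_j(b)\in\mathbf F$. Since each $\beta_j(b)\in\{0,1\}$, the product is $1$ exactly when $\beta_j(b)=1$ for every $j$, so $N$ is the number of $b\in\{0,1\}^k$ meeting all $k$ conditions $\beta_j(b)=1$, reduced modulo the characteristic~$2$. Each condition is an affine equation over $\operatorname{GF}(2)$, namely $a_{c(e_j)}\transpose b=y_j$ with $y_j=1$ when $e_j$ is oriented upward and $y_j=0$ otherwise; collecting the $k$ equations yields the system $A_Hb=y$ with $A_H$ as in~\eqref{eq: A_H}. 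The solution set of $A_Hb=y$ over $\operatorname{GF}(2)$ is either empty or a coset of $\ker A_H$, which has $2^{k-\rk A_H}$ elements, so $N$ (as an integer) is $0$ or $2^{k-\rk A_H}$. If $\rk A_H=k$ the square matrix $A_H$ is invertible, the system has a unique solution, and $N=1=2^{k-\rk A_H}$ in $\mathbf F$; if $\rk A_H<k$ then $k-\rk A_H\ge 1$ and $N$ is even either way, so $N=0=2^{k-\rk A_H}$ in $\mathbf F$. In all cases $N=2^{k-\rk A_H}$, which is precisely~\eqref{eq: labelsum}.

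The only step requiring genuine care is this last identification of $N$ with $2^{k-\rk A_H}$ in characteristic~$2$: one must handle the case where the affine system $A_Hb=y$ is inconsistent, where the literal count is $0$ rather than $2^{k-\rk A_H}$, and note that this discrepancy disappears modulo~$2$ exactly because $\rk A_H<k$ in that case. Everything else is routine bookkeeping of the product over $E(H)$.
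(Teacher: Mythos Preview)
Your proof is correct and follows essentially the same route as the paper: factor $\lambda_{b,A}(H)$ by arc type, isolate the $b$-dependence in the rainbow factors, and identify the surviving sum over $b$ with the (mod~$2$) count of solutions to the linear system $A_Hb=y$. If anything, you are more explicit than the paper, which simply asserts that the number of $b$ with $A_Hb=y$ is $2^{k-\rk A_H}\pmod 2$; your observation that the inconsistent case can only occur when $\rk A_H<k$, so the discrepancy vanishes mod~$2$, is exactly the justification behind that assertion.
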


In particular, if $r=k$ and $A_H$ has full rank, this is a nonzero monomial.
Otherwise, since $\mathbf F$ has characteristic~$2$, the expression vanishes.

\begin{proof}
  Returning to \eqref{eq: lambda def} we have
  \begin{equation}\label{eq: lambda_x(H)}
    \lambda_{b,A}(H) =  
    \left(
    \prod_{(u,v)\in R(H)} \poly \lambda_{b,A}((u,v)) 
    \right)
    \left( \prod_{\langle u, v\rangle\in X(H)} \poly X_{uv}\right)
    \poly Y^{i-k} 
    \poly  Z^{n-i}
    \,.
  \end{equation}
  Thus it suffices to understand the first product.
  From the definition~\eqref{eq: rainbow label} we see that this expression either is $0$ (namely, when at least one of the coefficients to some $\poly R_{uv}$ is $0$) or the $k$-variate nonzero monomial $\prod_{(u,v)\in R(H)} \poly R_{uv}$ (namely, when the coefficient of every $\poly R_{uv}$ is $1$).
  Thus, since we are computing in characteristic~$2$, the only way for $\lambda_{b,A}(R(H))$ to \emph{not} vanish is if
  \[ a_{c(uv)}\transpose b  =
  \begin{cases}
    1\,, & \text{if $u<v$}\,;\\
    0\,, & \text{if $u>v$}\,,\\
   \end{cases}
   \qquad\text{for each rainbow arc $(u,v) \in R(H)$}\,.\]
  The above condition is a set of $k$ constraints on the filter bit vector~$b$, each using a row of $A_H$, so they can be succinctly expressed in matrix form as
  \[
    (A_H) b = y\,,
  \]
  where $y\in\{0,1\}^r$ is defined by $y_i=1$ if $u<v$ and $y_i=0$ otherwise.
  We conclude that
  \[ \lambda_{b, A}(R(H)) = 
  \begin{cases}
    \prod_{(u,v)\in R(H)} \poly R_{uv}\poly W^{w(uv)}\,, & \text{if } (A_H) b= y\,;\\
    0\,, & \text{otherwise} \,.
  \end{cases}\]
  The number of $b$ for which $(A_H)b =y$ is $2^{k-\rk A_H} \pmod{2}$.
  Thus, we have
  \[ \sum_{b\in\{0,1\}^k}
  \lambda_{b, A}(R(H)) =
  2^{k-\rk A_H} \! \prod_{(u,v)\in R(H)} \poly R_{uv}\poly W^{w(uv)} = 
  2^{k-\rk A_H} \poly W^t\! \prod_{(u,v)\in R(H)} \poly R_{uv}\,.
  \qedhere
  \]
\end{proof}

We infer that insufficiently colourful subgraphs do not contribute to $g$:

\begin{lemma}[Subgraphs must be multi-coloured]
  \label{lem: at least k}
  Let $H$ have $k$ rainbow arcs.
  If the rainbow arcs have fewer than $k$ distinct colours, 
  then for all $A$, we have $\Lambda_A (H) = 0$.
\end{lemma}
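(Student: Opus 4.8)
The plan is to reduce Lemma~\ref{lem: at least k} to Lemma~\ref{lem: labelsum}. If $H$ has $k$ rainbow arcs but these use fewer than $k$ distinct colours, then two of the arcs, say $e_p$ and $e_q$ with $p\neq q$, satisfy $c(e_p)=c(e_q)$. By the definition~\eqref{eq: A_H} of $A_H$, its $p$th and $q$th rows are $a_{c(e_p)}\transpose$ and $a_{c(e_q)}\transpose$, which are therefore identical. Hence $A_H$ has two equal rows, so its rows cannot all be linearly independent, giving $\rk A_H \leq k-1 < k$.

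Now I would split on whether $H$ actually has the shape required by Lemma~\ref{lem: labelsum}, namely $k$ rainbow arcs together with some number $i-k$ of extra arcs and $n-i$ self-loops (equivalently, $H$ is a cycle cover of $G'$ whose rainbow arcs number exactly $k$). If $H$ has this shape, then Lemma~\ref{lem: labelsum} applies directly and gives $\Lambda_A(H) = 2^{k-\rk A_H}(\cdots)$; since $\rk A_H \leq k-1$, the exponent $k - \rk A_H \geq 1$, so the factor $2^{k-\rk A_H}$ is a positive even integer, and as $\mathbf F$ has characteristic~$2$ this makes $\Lambda_A(H) = 0$, for every choice of $A$. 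If instead $H$ does not have that exact shape, then I would observe that the argument inside the proof of Lemma~\ref{lem: labelsum} computing $\sum_b \lambda_{b,A}(R(H))$ only used the $k$ rainbow arcs of $H$ and their colours; the same linear-algebra count shows the number of $b$ satisfying the constraint system $(A_H)b=y$ is a multiple of $2^{k-\rk A_H}$, hence even, and so the sum over $b$ of the rainbow-arc contribution already vanishes in characteristic~$2$, which forces $\Lambda_A(H)=0$ regardless of the non-rainbow arcs of $H$.

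I do not expect any serious obstacle here; the lemma is essentially a corollary of Lemma~\ref{lem: labelsum} (or of its proof) once one notes that repeated colours force repeated rows in $A_H$ and hence a rank deficiency. The only point requiring a little care is making sure the statement is claimed for \emph{all} $A$ and not merely generic $A$: this is immediate because the conclusion $\rk A_H \leq k-1$ holds for every $A$ whenever two rows are forced equal, and the parity-based cancellation in characteristic~$2$ does not depend on $A$ either. If I wanted to keep the proof maximally short I would phrase it purely as: repeated colours $\Rightarrow$ repeated rows in $A_H$ $\Rightarrow$ $\rk A_H<k$ $\Rightarrow$ $2^{k-\rk A_H}\equiv 0 \pmod 2$ $\Rightarrow$ $\Lambda_A(H)=0$ by Lemma~\ref{lem: labelsum}, after first noting that a subgraph $H$ with exactly $k$ rainbow arcs that is not of the cycle-cover shape contributes $0$ to $g$ for a separate trivial reason (it is not in $\mathscr C$), so only cycle-cover $H$ matter.
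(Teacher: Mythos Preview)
Your proposal is correct and takes essentially the same approach as the paper: repeated colours force repeated rows in $A_H$, hence $\rk A_H<k$, hence $2^{k-\rk A_H}$ is even and $\Lambda_A(H)$ vanishes in characteristic~$2$ via Lemma~\ref{lem: labelsum}. The paper's proof is exactly your one-line summary; your additional case split on whether $H$ has the precise cycle-cover shape is cautious but unnecessary, since in the paper the lemma is only ever applied to covers in~$\mathscr C_i$, which already have that shape.
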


\begin{proof}
  The matrix~$A_H$ defined in Eq.~\eqref{eq: A_H} has fewer than $k$ different rows,
  so $\rk A_H < k$.
  Thus, $2^{k-\rk A_H}$ is even and $\Lambda_A(K)$ as seen in Equation~\eqref{eq: labelsum} vanishes in characteristic~$2$.
\end{proof}

\subsubsection{Stratified Cycle Covers}

It will be convenient to group the cycle covers $\mathscr C$ according to how many arcs of each type they contain.
For $r\in\{0,\ldots,n\}$, $s\in\{n-r, \ldots n\}$, and $q\in\{0,\ldots, r\}$ we let $\mathscr C_{q, r,s}$ denote the cycle covers containing $r$~rainbow arcs of total weight $q$, and $s$~self-loops (and therefore $n-r-s$~extra arcs.)
We can then collect terms 
\[
  g = \sum_{K\in\mathscr C} \Lambda_A(K) = 
  \sum_{r=0}^n \sum_{s=n-r}^n\sum_{q=0}^r \poly W^q \poly Y^{n-r-s} \poly Z^s
  f_{q, r,s}\,,
\]
where we define the polynomials $f_{q,r,s}\in\mathbf F[\{\poly R_{uv}\},\{\poly X_{uv}\}]$ as
\[ 
f_{q,r,s} = \sum_{K\in \mathscr C_{q,r,s}}[\poly W^q \poly Y^{n-r-s} \poly Z^s] \Lambda_A(K)
\,.
\]
Of particular interest are the covers containing exactly $k$~rainbow edges of total weight~$t$ and $n-i$ self loops (and therefore $i-k$ extra arcs), so we define $\mathscr C_i = \mathscr C_{j, k, n-i}$ and the \emph{length-indexed polynomials}~$f_i = f_{j, k, n-i}$ for $i\in\{k,\ldots, n\}$.
Using Lemma~\ref{lem: labelsum}, we can express these polynomials as
\[
  f_i = 
  \sum_{K\in \mathscr C_i}
  2^{k-\rk A_K} 
  \left(\prod_{(u,v)\in R(K)} R_{uv}\right)
  \left(\prod_{\langle u,v\rangle\in X(K)} X_{uv}\right)\,.
\]
We proceed to show that $f_k$, $\ldots$, $f_{l-1}$ are the zero polynomial but $f_l$ (probably) is not.

\subsubsection{Defective Cycle Covers.}
\label{sec: defective}

A cycle $D$ is \emph{defective} if it contains a rainbow arc but not $(v_1,v_2)$.
Let $\mathscr D_i\subseteq\mathscr C_i$ denote the cycle covers that contain a defective cycle, and call them defective cycle covers.

\begin{lemma}[Defective cycle covers vanish]
  \label{lem: unicycle}
  For every $A$,
  \[
   \sum_{K\in\mathscr D_i} \Lambda_A(K) = 0\,. 
 \]
\end{lemma}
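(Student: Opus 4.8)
The plan is to exhibit a fixed-point-free involution $\phi$ on the set $\mathscr D_i$ of defective cycle covers that preserves the label sum $\Lambda_A$, so that the contributions cancel in pairs over characteristic $2$. The natural choice, mimicking Wahlstr\"om's argument for \textsc{$T$-Cycle}, is to \emph{reverse} a canonical defective cycle: given $K\in\mathscr D_i$, single out the defective cycle $D$ of $K$ that is ``smallest'' under some fixed tie-breaking rule (say, the one containing the rainbow arc whose underlying edge $uv$ is least in a fixed total order on $E(G)$), and let $\phi(K)$ be obtained from $K$ by replacing $D$ with the cycle $D'$ traversed in the opposite direction. Since $D$ contains at least one rainbow arc and is not the arc $(v_1,v_2)$, the reversed cycle $D'$ still lies in $G'$: every rainbow arc $(u,v)$ has its antiparallel partner $(v,u)$ present (the only missing rainbow arc is $(v_2,v_1)$, which cannot occur here because $D$ is defective and hence avoids the edge $v_1v_2$ entirely), and every extra arc $\langle u,v\rangle$ has its partner $\langle v,u\rangle$. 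So $\phi$ is well-defined, maps $\mathscr D_i$ to itself, is an involution, and has no fixed points because reversing a cycle of length $\ge 2$ changes it (a length-$1$ defective cycle is impossible since self-loops carry no rainbow arc).

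Next I would check that $\phi$ preserves the stratum, i.e. $\phi(K)\in\mathscr C_i$: reversing $D$ does not change which vertices lie on it, nor the number of rainbow arcs, extra arcs, or self-loops, nor the underlying edges, so the weight $\sum_{(u,v)\in R(K)} w(uv) = t$ and the counts are unchanged. The crucial point is that $\Lambda_A(K) = \Lambda_A(\phi(K))$. By Lemma~\ref{lem: labelsum}, $\Lambda_A(K)$ depends on $K$ only through the multiset of underlying edges of its rainbow and extra arcs (which fixes the monomial $\prod \poly R_{uv}\prod \poly X_{uv}\,\poly W^t\poly Y^{i-k}\poly Z^{n-i}$) and through $\rk A_K$. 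Reversing $D$ leaves the underlying edges of all arcs untouched, and $A_K$ is defined (Eq.~\eqref{eq: A_H}) purely from the colours $c(uv)$ of the rainbow edges, which are also untouched. Hence $A_{\phi(K)} = A_K$ up to reordering rows, so $\rk A_{\phi(K)} = \rk A_K$ and $\Lambda_A(\phi(K)) = \Lambda_A(K)$. Pairing $K$ with $\phi(K)$ then gives $\sum_{K\in\mathscr D_i}\Lambda_A(K) = \sum_{\{K,\phi(K)\}} 2\,\Lambda_A(K) = 0$ in characteristic $2$.

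The main obstacle is making the ``canonical defective cycle'' choice genuinely $\phi$-invariant: I must ensure that the tie-breaking rule selects the \emph{same} cycle in $K$ and in $\phi(K)$. Since $\phi$ only reverses one cycle and leaves all others (and all underlying edge sets) intact, a rule phrased entirely in terms of underlying edges — ``pick the defective cycle containing the globally least rainbow edge'' — is automatically invariant, and I would verify that $D$ itself is the cycle so selected in both $K$ and $\phi(K)$ (its set of underlying edges is the same, and it is still defective after reversal). A minor point to dispatch along the way: one must confirm that a defective cycle of $K$ is never a self-loop (immediate, self-loops are labelled $\poly Z$ and carry no rainbow arc) and never the single arc $(v_1,v_2)$ paired with... — there is no such degenerate case since $(v_1,v_2)$ is exactly what ``defective'' excludes — so reversal always acts nontrivially. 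With these checks the lemma follows.
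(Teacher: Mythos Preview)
Your approach is the same as the paper's---reverse a canonically chosen defective cycle---and most of your checks go through. The single genuine gap is your claim that $\phi$ is fixed-point free ``because reversing a cycle of length $\ge 2$ changes it.'' That is false for a directed $2$-cycle whose two arcs are the antiparallel rainbow pair $(u,v)$ and $(v,u)$: reversing such a cycle returns exactly the same set of arcs, so $\phi(K)=K$. Your parenthetical only rules out length-$1$ cycles; the length-$2$ all-rainbow case is the one that actually needs an argument, and you have not given one.

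The fix (and what the paper does) is to restrict the pairing to those $K\in\mathscr D_i$ with $\Lambda_A(K)\neq 0$. If $K$ contains both rainbow arcs $(u,v)$ and $(v,u)$ for some edge $uv$, then for every filter vector $b$ exactly one of the coefficients $a_{c(uv)}\transpose b$ and $1+a_{c(uv)}\transpose b$ in~\eqref{eq: rainbow label} is~$0$, so $\lambda_{b,A}(K)=0$ for all $b$ and hence $\Lambda_A(K)=0$. Thus every fixed point of $\phi$ already contributes nothing to the sum; on the remaining covers $\phi$ is a genuine fixed-point-free involution, and your cancellation argument then concludes as written.
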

\begin{proof}
  For cycle cover $K\in\mathscr D_j$ let $D=D(K)$ be its defective cycle containing the smallest vertex.

  We will pair up the covers  $K\in\mathscr D_j$ with $\Lambda_A(K)\neq 0$ by reversing the defective cycle's orientation.
  To be precise, the cover~$K$ is paired with the cover~$K'\in\mathscr D_j$ where $D(K')$ is $D$ reversed (each rainbow arc $(u,v)\in E(D)$ becomes $(v,u)$, each extra arc $\langle u,v\rangle$ becomes $\langle v, u\rangle$) and all other cycles are the same.
  We need to check that this is indeed a pairing (more formally, that it defines a fixed-point free involution on the remaining covers of $\mathscr D_j$.)
  Indeed, repeating the reversal brings $K$ back to itself (so $K''=K$) and therefore the pairing is unique.
  Moreover, $K\neq K'$, because $K$ contains a rainbow arc, say $(u,v)$, and $K'$ its reversal $(v,u)$.
  Yet $K$ cannot itself include $(v,u)$, because otherwise $\Lambda_A(K)=0$ since either of the terms $\lambda_{b,A}((u,v))$ or $\lambda_{b,A}((v,u))$ is $0$, see~\eqref{eq: lambda def}.

  The argument finishes by observing that paired covers $K$ and $K'$ have the same label.
  To see this, the rainbow arcs of $K'$ are those of $K$, only with some of them reversed, so they have the same colour.
  Thus, the matrices $A_K$ and $A_{K'}$ as defined by \eqref{eq: A_H} are the same and have full rank. There is precisely one $b$-vector satisfying
  $A_{K'}b=y_{K'}$,  and the products $\prod_{(u,v)\in R(K)}\poly R_{uv}$ and $\prod_{(u,v)\in R(K')}\poly R_{uv}$ are equal.
  Thus, by Lemma~\ref{lem: labelsum}, we have $\Lambda_A(K) + \Lambda_A(K')=0$ in characteristic~$2$ and the claim follows.
\end{proof}

\begin{lemma}\label{lem: f_i zero}
  Let $l$ be the minimal length of any $k, v_1v_2$-colourful cycle in $G$.
  Then $f_i=0$ for $k\leq i<l$.
\end{lemma}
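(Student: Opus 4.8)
The plan is to first delete the defective cycle covers from the defining sum for $f_i$ by invoking Lemma~\ref{lem: unicycle}, and then to argue that each surviving cover whose term does not vanish would produce a $k,v_1v_2$-colourful cycle of weight~$t$ and length at most~$i$ in $G$ — which cannot exist once $i<l$.

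First I would observe that, by Lemma~\ref{lem: labelsum}, every $K\in\mathscr C_i$ satisfies
\[
  \Lambda_A(K)=2^{k-\rk A_K}\Bigl(\prod_{(u,v)\in R(K)}R_{uv}\Bigr)\Bigl(\prod_{\langle u,v\rangle\in X(K)}X_{uv}\Bigr)\poly W^{t}\poly Y^{i-k}\poly Z^{n-i},
\]
so all these contributions carry the same $\poly W\poly Y\poly Z$-factor and differ only in their $\{R_{uv},X_{uv}\}$-parts. Hence Lemma~\ref{lem: unicycle}, which says $\sum_{K\in\mathscr D_i}\Lambda_A(K)=0$, yields $\sum_{K\in\mathscr D_i}2^{k-\rk A_K}(\prod R)(\prod X)=0$, and therefore
\[
  f_i=\sum_{K\in\mathscr C_i\setminus\mathscr D_i}2^{k-\rk A_K}\Bigl(\prod_{(u,v)\in R(K)}R_{uv}\Bigr)\Bigl(\prod_{\langle u,v\rangle\in X(K)}X_{uv}\Bigr).
\]

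Next I would analyse a single cover $K\in\mathscr C_i\setminus\mathscr D_i$. If $\rk A_K<k$ its term vanishes in characteristic~$2$, so assume $\rk A_K=k$; then the $k$ rows $a_{c(e)}\transpose$ indexed by the rainbow arcs $e$ of $K$ are distinct, so those arcs carry $k$ distinct colours. Since $K$ has $k\ge 1$ rainbow arcs and, being non-defective, contains no cycle that has a rainbow arc but avoids $(v_1,v_2)$, the cycle of $K$ through any given rainbow arc must contain $(v_1,v_2)$; as $(v_1,v_2)$ lies on a unique cycle $C$ of $K$, all $k$ rainbow arcs of $K$ — the arc $(v_1,v_2)$ among them — lie on $C$, which in addition uses some number $p\le i-k$ of extra arcs. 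Moreover $C$ has length at least~$3$ (it is not a loop, and a cycle of length~$2$ through $(v_1,v_2)$ would require an arc from $v_2$ to $v_1$, which $G'$ lacks), so its $k+p$ vertices are distinct and the $k+p$ edges of $G$ obtained by collapsing each arc $(u,v)$ or $\langle u,v\rangle$ of $C$ to the edge $uv$ are distinct; thus $C$ becomes a simple cycle $\bar C$ of $G$ of length $k+p$ through the edge $v_1v_2$, among whose edges the $k$ images of rainbow arcs have $k$ distinct colours and total weight equal to the rainbow weight of $K$, namely $t$. Hence $\bar C$ is a $k,v_1v_2$-colourful cycle of weight~$t$ of length $k+p\le i$, so $l\le i$.

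Finally, for $k\le i<l$ (this subsumes the case $l=\infty$, where no $k,v_1v_2$-colourful cycle of weight~$t$ exists at all) the preceding paragraph rules out any $K\in\mathscr C_i\setminus\mathscr D_i$ with $\rk A_K=k$, so every term in the displayed sum for $f_i$ vanishes and $f_i=0$. I expect the one genuinely delicate step to be the structural bookkeeping in the third paragraph — verifying that the cycle through $(v_1,v_2)$ collects all the rainbow arcs and that collapsing it yields a bona fide simple cycle of $G$ of length at most~$i$ — whereas passing from Lemma~\ref{lem: unicycle} to the reduced sum and drawing the final contradiction are routine.
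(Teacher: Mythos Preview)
Your proposal is correct and follows essentially the same approach as the paper's proof: eliminate defective covers via Lemma~\ref{lem: unicycle}, then observe that any surviving contributing cover yields a $k,v_1v_2$-colourful cycle in $G$ of length at most $i$, contradicting $i<l$. You are simply more explicit than the paper in two places --- you spell out how Lemma~\ref{lem: unicycle} translates into dropping $\mathscr D_i$ from the sum defining $f_i$, and you verify carefully that the directed cycle through $(v_1,v_2)$ collapses to a genuine simple cycle of $G$ (ruling out length~$2$ and repeated edges) --- and you also track the weight~$t$, which the paper's proof glosses over.
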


\begin{proof}
  Assume $K\in \mathscr C_i$ contributes to $f_i$.
  It contains $k$ rainbow arcs and $i-k$ extra arcs.
  Lemma~\ref{lem: at least k} says that the rainbow arcs all have distinct colours, and
  Lemma~\ref{lem: unicycle} says that these rainbow arcs
  all lie on a common cycle~$C$ in~$G'$, which also includes the arc~$(v_1,v_2)$.
  Because $C$ describes a $k$,~$v_1v_2$-colourful cycle in~$G$, its length $|C|$ is at least~$l$.
  On the other hand, $K$ contains only $i$~arcs that aren't self-loops, so $|C| \leq i$ and therefore $i \geq l$.
\end{proof}

\subsubsection{The Canonical Cycle Cover}

Let $C$ be a solution to the  {\SkCC} problem with rainbow subset~$R(C)$.
We construct the \empty{canonical cycle cover} $K_C$ in $G'$ as follows.
Enumerate the vertices of $C$ as $v_1,v_2,\ldots, v_l$ with $v_1= v_l$, so that $v_iv_{i+1}\in E(G)$.
This sequence naturally describes a directed cycle $C'$ in $G'$ starting with $(v_1,v_2)$;
if $v_iv_{i+1}\in R(C)$ we pick the rainbow arc $(v_i,v_{i+1})\in R(G')$, else we pick the extra arc $\langle v_i,v_{i+1}\rangle \in X(G')$.
The resulting directed cycle $C'$ contains exactly $k$ rainbow arcs, each corresponding to a different colour on~$C$.
To complete the cover $K_C$, for each of the $n-l$ remaining vertices $v\notin V(C)$ we add the self-loop $(v,v)$ to $K_C$.

\begin{lemma}
  Assume $G$ contains a $k, v_1v_2$-colourful cycle $C$ of length~$l$.
  Then the polynomial $f_l$ contains the monomial
  \begin{equation}\label{eq: lambda canonical}
    2^{k-\rk A_{K_C}}
    \left(\prod_{\langle u, v\rangle\in X(K_C)} \poly X_{uv} \right)
    \left(\prod_{(u,v)\in R(K_C)} \poly R_{uv} \right)\,,
  \end{equation}
\end{lemma}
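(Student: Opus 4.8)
The plan is to show that the canonical cycle cover $K_C$ contributes exactly the displayed monomial to $f_l$, and that no other cover in $\mathscr C_l$ contributes that same monomial (up to the even-multiplicity cancellations we have already controlled), so that the monomial survives in $f_l$. First I would observe that $K_C$ genuinely lies in $\mathscr C_l$: it consists of the directed cycle $C'$ — which has $k$ rainbow arcs of total weight $t$ (one per colour of the rainbow subset $R(C)$), together with $l-k$ extra arcs — plus $n-l$ self-loops. Hence $K_C \in \mathscr C_{t,k,n-l} = \mathscr C_l$, and Lemma~\ref{lem: labelsum} applies directly to $H=K_C$, giving
\[
  \Lambda_A(K_C) = 2^{k-\rk A_{K_C}}
  \left(\prod_{(u,v)\in R(K_C)}\poly R_{uv}\right)
  \left(\prod_{\langle u,v\rangle\in X(K_C)}\poly X_{uv}\right)
  \poly W^{t}\poly Y^{l-k}\poly Z^{n-l}.
\]
Extracting the coefficient of $\poly W^{t}\poly Y^{l-k}\poly Z^{n-l}$ yields precisely the monomial~\eqref{eq: lambda canonical} as a summand of $f_l$. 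Since $C'$ uses all $k$ colours of $R(C)$, the matrix $A_{K_C}$ has $k$ distinct rows drawn from $A$, so for a suitably generic $A$ it has full rank and the coefficient $2^{k-\rk A_{K_C}}$ is the odd number $2^{0}=1$; but for the purely formal statement here it suffices that the monomial \emph{appears} in the expansion of $f_l$ as written, which is what the lemma claims.

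The substantive point — and the main obstacle — is to rule out that some other cover $K'\in\mathscr C_l$, distinct from $K_C$, produces the exact same product of $\poly R$- and $\poly X$-indeterminates, because then its contribution could cancel $K_C$'s in characteristic~$2$. Here I would use the fact that the variables $\poly R_{uv}$ and $\poly X_{uv}$ are keyed to the undirected edge $uv$, but crucially each undirected edge $uv$ contributes \emph{at most one} arc to any cycle cover (a cover has one outgoing and one incoming arc per vertex, and $\langle u,v\rangle,\langle v,u\rangle$ are antiparallel, as are the rainbow pair $(u,v),(v,u)$). Therefore the monomial $\prod R_{uv}\prod X_{uv}$ records exactly which edges of $G$ are used and whether each is used as a rainbow or extra arc; it determines the \emph{underlying undirected edge-coloured multigraph} of the non-loop part of the cover, hence (since $\mathscr C_l$ forces $n-l$ self-loops) the set $S$ of $l$ vertices touched and the spanning structure on $S$. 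A cover in $\mathscr C_l$ realizing this edge set is a union of vertex-disjoint directed cycles spanning $S$ using exactly those edges. If that union is a single cycle, it is $C'$ or its reversal — and the reversal is a different cover only if... in fact by Lemma~\ref{lem: unicycle} all \emph{defective} covers already cancel among themselves, so the covers in $\mathscr D_l$ are irrelevant; any surviving cover sharing $K_C$'s monomial must contain $(v_1,v_2)$ and have all $k$ rainbow arcs on one cycle, forcing that cycle to be exactly $C'$ (orientation is pinned by starting at $(v_1,v_2)$ and the edge set being fixed), and the self-loops are then forced too, so $K'=K_C$.

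Concretely I would organize the argument as follows: (i) verify $K_C\in\mathscr C_l$ and apply Lemma~\ref{lem: labelsum} to get $\Lambda_A(K_C)$; (ii) extract the $\poly W^t\poly Y^{l-k}\poly Z^{n-l}$-coefficient to see the monomial~\eqref{eq: lambda canonical} appears in the sum defining $f_l$; (iii) argue that within $\mathscr C_l\setminus\mathscr D_l$ the map $K\mapsto(\text{its }\poly R,\poly X\text{-monomial})$ is injective on the fibre over $K_C$'s monomial, using the one-arc-per-edge observation plus the fact that fixing the edge set and requiring $(v_1,v_2)\in E(K)$ with all rainbow arcs on a common cycle pins down $C'$ and hence $K_C$; and (iv) combine (ii), (iii), and Lemma~\ref{lem: unicycle} (which kills the defective part) to conclude that the coefficient of~\eqref{eq: lambda canonical} in $f_l$ equals $2^{k-\rk A_{K_C}}$, which is nonzero for generic $A$. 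The only delicate bookkeeping is step (iii), making sure that no \emph{disconnected} cover (two or more disjoint cycles spanning $S$) can reuse the same edge set while keeping all $k$ rainbow arcs; but this is immediate, since all $k$ rainbow arcs of $K_C$ lie on the single cycle $C'$, and any cover with the same edge set has the same multiset of arcs, so its rainbow arcs form the same edge set, which is connected (it is the edge set of $C$) — hence cannot be split across components of a disjoint-cycle union that still respects degrees. This makes the injectivity clean and completes the proof.
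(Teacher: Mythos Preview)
Your argument is essentially correct and reaches the same conclusion as the paper, but it takes an unnecessary detour through Lemma~\ref{lem: unicycle} and contains one false intermediate claim.

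The paper's proof is shorter: any $K\in\mathscr C_l$ contributing the same $\{\poly R_{uv}\},\{\poly X_{uv}\}$-monomial as $K_C$ must have the same undirected non-loop edge set, namely $E(C)$; since $C$ is a single undirected cycle, the only cycle-cover orientations of $E(C)$ are the two directed $l$-cycles, and the one that would traverse $v_2\to v_1$ is impossible because $G'$ contains no arc (rainbow or extra) from $v_2$ to $v_1$. Hence $K=K_C$ outright. There is no need to split into defective and non-defective covers; no defective cover can share $K_C$'s monomial in the first place, so invoking Lemma~\ref{lem: unicycle} in your step~(iv) is harmless but redundant.

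Separately, your claim that ``each undirected edge $uv$ contributes at most one arc to any cycle cover'' is false as stated: the $2$-cycle consisting of the rainbow arc $(u,v)$ together with the extra arc $\langle v,u\rangle$ is a legitimate component of a cover and uses the undirected edge $uv$ twice (its label contains the factor $\poly R_{uv}\poly X_{uv}$); likewise the pair $\langle u,v\rangle,\langle v,u\rangle$ yields a $2$-cycle with factor $\poly X_{uv}^2$. What actually makes the uniqueness argument work is that $K_C$'s monomial is \emph{multilinear} --- each $\poly R_{uv}$ or $\poly X_{uv}$ appears with exponent exactly~$1$ --- so any cover matching that monomial must use each edge of $C$ exactly once, with the specified type, and no other non-loop edge. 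From there the single-cycle-orientation argument goes through as you (and the paper) describe.
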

\begin{proof}
  Clearly, $K_C$ belongs to $\mathscr C_l$, and
  according to  Lemma~\ref{lem: labelsum}, the contribution of $\Lambda_A(K_C)$ is exactly given by~\eqref{eq: lambda canonical}.

  We need to check that no other cover contributes the same monomial (which might cancel $\Lambda_A(K_C)$).
  Thus, let $K\in \mathscr C_l$ be a cycle cover contributing \eqref{eq: lambda canonical}.
  Except for their orientation, we can infer the arcs of $K$ from the indeterminates, so $K$ must be an orientation of the arcs of~$C$.
  Since $K$ is a cycle cover, these orientations must agree to form a directed cycle.
  Finally, the orientation of this cycle is fixed because we constructed $G'$ to contain an arc from  $v_1$ to $v_2$ but not vice versa.
  We conclude that $K$ equals $K_C$.
\end{proof}

We now apply Koutis's~\cite{Koutis2008} observation and proof that a random set of $k$~vectors from $\{0,1\}^k$ are linearly independent with constant non-zero probability.

\begin{lemma}\label{lem: random matrix rank}
  Let $B$ be a random $k\times k$ matrix with values from $\{0,1\}$.
  Then
  \[
    \Pr(\rk B = k)> \tfrac{1}{4}\,.\] 
\end{lemma}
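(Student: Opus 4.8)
The plan is to compute (or lower-bound) the probability that a uniformly random matrix $B\in\{0,1\}^{k\times k}$ over $\operatorname{GF}(2)$ is invertible by the standard row-by-row exposure argument. Reveal the rows $b_1\transpose,\ldots,b_k\transpose$ one at a time. Given that the first $j$ rows are linearly independent, they span a subspace of $\operatorname{GF}(2)^k$ of size $2^j$; the next row $b_{j+1}\transpose$ is uniform in a set of $2^k$ vectors and keeps the family independent exactly when it avoids that span, which happens with probability $1 - 2^{j}/2^{k} = 1 - 2^{j-k}$. Multiplying over $j=0,\ldots,k-1$ gives
\[
  \Pr(\rk B = k) = \prod_{j=0}^{k-1}\bigl(1 - 2^{\,j-k}\bigr) = \prod_{i=1}^{k}\bigl(1 - 2^{-i}\bigr).
\]

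It then remains to show this product exceeds $\tfrac14$ for every $k\ge 1$. First I would note the product is decreasing in $k$, so it suffices to bound its limit $\prod_{i\ge 1}(1-2^{-i})$ from below by $\tfrac14$ (or by a slightly larger explicit constant, e.g.\ noting the first two factors already give $\tfrac12\cdot\tfrac34=\tfrac38$ and the tail $\prod_{i\ge 3}(1-2^{-i})$ is at least $1 - \sum_{i\ge 3}2^{-i} = \tfrac34$ by the union-type bound $\prod(1-x_i)\ge 1-\sum x_i$, yielding $\tfrac38\cdot\tfrac34 = \tfrac{9}{32} > \tfrac14$). One also checks the small cases directly: $k=1$ gives $\tfrac12$, $k=2$ gives $\tfrac38$, both above $\tfrac14$, and for $k\ge 3$ the displayed bound $9/32$ applies. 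That closes the argument.

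The main obstacle, such as it is, is purely bookkeeping: making the inequality $\prod_{i=1}^k(1-2^{-i}) > \tfrac14$ rigorous with a clean explicit estimate rather than appealing to a known value of the $q$-Pochhammer symbol. Splitting off the first two (or three) factors and applying the elementary bound $\prod_{i}(1-x_i)\ge 1-\sum_i x_i$ to the geometric tail is the cleanest route and avoids any delicate estimate. I would present the row-exposure computation in full and then dispatch the numeric inequality in one short paragraph as above.
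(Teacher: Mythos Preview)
Your proof is correct, and the first half is the same as the paper's: both obtain the exact value $\Pr(\rk B=k)=\prod_{i=1}^{k}(1-2^{-i})$ by what is really the identical row-exposure/counting argument. The only genuine difference is the final numerical estimate. The paper compares $2^{-i}$ with $1/i^{2}$ for $i\ge 2$ and telescopes $\prod_{i=2}^{k}(1-1/i^{2})=\tfrac{k+1}{2k}$ to get the lower bound $\tfrac{k+1}{4k}>\tfrac14$, whereas you peel off the first two factors and bound the geometric tail via the Weierstrass inequality $\prod(1-x_i)\ge 1-\sum x_i$, landing at $9/32>\tfrac14$. Your route is the cleaner one here: the paper's termwise comparison $1-2^{-i}\ge 1-1/i^{2}$ actually fails at $i=3$ (since $2^{3}=8<9=3^{2}$), and indeed the asserted bound $\tfrac{k+1}{4k}$ exceeds the true probability for $k\in\{3,4,5\}$, so that intermediate step is not quite right even though the conclusion $\Pr>\tfrac14$ holds regardless. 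Your estimate avoids this slip entirely.
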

\begin{proof}
  The number of $k\times k$-matrices from $\{0,1\}$ with full rank is \[
    (2^k-2^0) (2^k -2^1)\cdots (2^k-2^{k-1}) =\prod_{i=0}^{k-1} (2^k - 2^i)
    =
    2^{k^2}\prod_{i=0}^{k-1} \biggl(1-\frac{1}{2^{k-i}}\biggr)
    =
    2^{k^2}\prod_{i=1}^{k} \biggl(1-\frac{1}{2^i}\biggr)
    \,.\]
  The random matrix $B$ is sampled using $k^2$ choices from $\{0,1\}$, so we have
  \[
   \Pr(\rk(B)= k) = 
  \prod_{i=1}^k \biggl(1-\frac{1}{2^i}\biggr) \geq
  \tfrac{1}{2}\prod_{i=2}^k \biggl(1-\frac{1}{i^2}\biggr) = 
  \tfrac{1}{2}\prod_{i=2}^k \frac{(i+1)(i-1)}{i^2} =
  \frac{k+1}{4k} > \tfrac{1}{4}\,. 
  \qedhere
\]
\end{proof}

We conclude that unlike $f_k,\ldots, f_{l-1}$, the polynomial $f_l$ (probably) does not vanish:
\begin{lemma}\label{lem: f_l nonzero}
If  $A$ is chosen uniformly at random from $\{0, 1\}^{s\times k}$ then
  \(\Pr(f_l\neq 0) > \tfrac14\).
\end{lemma}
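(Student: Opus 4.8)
The plan is to extract the coefficient of one carefully chosen monomial of $f_l$ and argue that it is $1$ with probability more than $\tfrac14$. First I would fix a $k,v_1v_2$-colourful cycle $C$ of length $l$ (such a cycle exists, since $l$ is by definition the minimum such length) and form its canonical cycle cover $K_C\in\mathscr C_l$. By the preceding lemma, $\Lambda_A(K_C)$ equals $2^{k-\rk A_{K_C}}$ times the squarefree monomial $\mu = \bigl(\prod_{\langle u,v\rangle\in X(K_C)}\poly X_{uv}\bigr)\bigl(\prod_{(u,v)\in R(K_C)}\poly R_{uv}\bigr)$, and moreover $K_C$ is the \emph{only} cover in $\mathscr C_l$ whose contribution involves $\mu$. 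Hence the coefficient $[\mu]f_l$, computed in $\mathbf F$, is exactly $2^{k-\rk A_{K_C}}$. Since $\mathbf F$ has characteristic $2$, this coefficient is $1$ precisely when $A_{K_C}$ has full rank $k$, and in that event $f_l$ is a nonzero polynomial. So it remains to lower bound $\Pr(\rk A_{K_C}=k)$.

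For that step the key observation is that the $k$ rainbow arcs of $K_C$ carry $k$ \emph{pairwise distinct} colours — this is exactly what it means for $C$ to be $k$-colourful. Consequently $A_{K_C}$ is assembled from $k$ distinct rows of $A$, indexed by those $k$ colours; and since the entries of $A$ are drawn independently and uniformly from $\{0,1\}$, these $k$ rows are independent and uniform over $\{0,1\}^k$. Thus $A_{K_C}$ has exactly the distribution of the random matrix $B$ in Lemma~\ref{lem: random matrix rank}, which gives $\Pr(\rk A_{K_C}=k)>\tfrac14$. Combining with the implication from the previous paragraph, $\Pr(f_l\neq 0)\ge\Pr(\rk A_{K_C}=k)>\tfrac14$.

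The one place that needs care — the part I would be most careful to spell out — is the implication ``$\rk A_{K_C}=k \Rightarrow f_l\neq0$'': it rests entirely on the uniqueness clause of the preceding lemma, which guarantees that $[\mu]f_l$ is the single term $2^{k-\rk A_{K_C}}$ and not an alternating sum over many covers that could cancel in characteristic $2$. Once that is in hand, the distinctness of the $k$ colours, the resulting independence and uniformity of the corresponding rows of $A$, and the appeal to Lemma~\ref{lem: random matrix rank} are all routine.
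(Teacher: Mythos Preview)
Your proposal is correct and follows essentially the same route as the paper's proof: invoke the preceding lemma to identify the coefficient of the canonical monomial in $f_l$ as $2^{k-\rk A_{K_C}}$, then apply Lemma~\ref{lem: random matrix rank}. You are more explicit than the paper on two points the paper leaves implicit---that the $k$ rainbow colours are distinct (so $A_{K_C}$ really is a uniformly random $k\times k$ matrix), and that the uniqueness clause is what prevents cancellation---but these are elaborations of the same argument, not a different one.
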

\begin{proof}
  The canonical cover $K_C$ contains $k$ rainbow edges, so $A_{K_C}$ is a $k\times k$ submatrix of $A$.
  Thus, from the above lemma, with probability more than $\frac14$, the coefficient $2^{k-\rk A_{K_C}}$ in~\eqref{eq: lambda canonical} is~$1$.
\end{proof}

\subsection{Correctness.}
We are ready to establish correctness of Algorithm~{\bf A}.
We need the famous lemma stating that randomised polynomial identity testing is efficient:

\begin{lemma}[DeMillo--Lipton--Schwartz--Zippel~\cite{DeMilloL1978,Schwartz1980,Zippel1979}]
\label{lem: DLSZ}
Let $f$ be a non-zero polynomial in $m'$ variables of total degree~$d$ over a field $\mathbf{F}$ with $d<|\mathbf{F}|$. 
  Then for uniformly chosen random points $x_1,\ldots, x_{m'}\in \mathbf F$ we have
\[
  \Pr\bigl(f(x_1,\ldots, x_{m'})\neq 0\bigr)\geq 1-\frac{d}{|\mathbf{F}|}.
\]
\end{lemma}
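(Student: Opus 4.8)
The plan is the classical induction on the number of variables $m'$. The base case $m'=1$ is immediate: a non-zero univariate polynomial over the field $\mathbf F$ of degree at most $d$ has at most $d$ roots in $\mathbf F$ (since $\mathbf F$ is an integral domain, so such a polynomial has at most $d$ linear factors), so a uniformly random $x_1\in\mathbf F$ is a root with probability at most $d/|\mathbf F|$, which gives $\Pr(f(x_1)\neq 0)\ge 1-d/|\mathbf F|$.

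For the inductive step I would write $f$ as a polynomial in its last indeterminate, $f=\sum_{j\ge 0}\poly X_{m'}^{\,j}\,f_j$ with $f_j\in\mathbf F[\poly X_1,\ldots,\poly X_{m'-1}]$ (a finite sum), and let $j^\star$ be the largest index with $f_{j^\star}$ not the zero polynomial; this index exists because $f\neq 0$. Each monomial of $f_{j^\star}$ times $\poly X_{m'}^{\,j^\star}$ is a monomial of $f$, so $f_{j^\star}$ has total degree at most $d-j^\star$, and the induction hypothesis applied to the non-zero $(m'-1)$-variate polynomial $f_{j^\star}$ gives $\Pr\bigl(f_{j^\star}(x_1,\ldots,x_{m'-1})=0\bigr)\le (d-j^\star)/|\mathbf F|$. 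Now condition on the complementary event that $f_{j^\star}(x_1,\ldots,x_{m'-1})\neq 0$: then $f(x_1,\ldots,x_{m'-1},\poly X_{m'})$ is a non-zero univariate polynomial in $\poly X_{m'}$ of degree exactly $j^\star$, so by the base case it vanishes at a uniformly random $x_{m'}$ with conditional probability at most $j^\star/|\mathbf F|$. A union bound over the two ways $f$ can evaluate to $0$ then yields
\[
\Pr\bigl(f(x_1,\ldots,x_{m'})=0\bigr)\le \frac{d-j^\star}{|\mathbf F|}+\frac{j^\star}{|\mathbf F|}=\frac{d}{|\mathbf F|},
\]
and taking complements completes the induction.

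There is no real obstacle here; the two points that need a little care are verifying the degree bound $\deg f_{j^\star}\le d-j^\star$ so that the induction hypothesis is invoked with the correct parameter, and decomposing the ``bad'' event precisely — either the leading coefficient $f_{j^\star}$ already vanishes at the sampled point $(x_1,\ldots,x_{m'-1})$, or it does not and the residual univariate polynomial in $\poly X_{m'}$ picks up a root at $x_{m'}$ — so that the two contributions sum to exactly $d/|\mathbf F|$. The hypothesis $d<|\mathbf F|$ plays no role in the argument beyond guaranteeing that the stated lower bound on the probability is positive (and hence meaningful).
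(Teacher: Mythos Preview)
Your proof is the standard and correct induction argument for the Schwartz--Zippel lemma. The paper, however, does not prove this lemma at all: it is stated with citations to~\cite{DeMilloL1978,Schwartz1980,Zippel1979} and used as a black box, so there is nothing to compare against beyond noting that your argument is exactly the one found in those references.
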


\begin{lemma}
  Algorithm {\bf A} returns the length of a shortest $k,v_1v_2$-colourful cycle of weight~$t$ with probability $\frac18$.
  If no such cycle exists, it returns $\infty$.
\end{lemma}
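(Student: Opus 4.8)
The plan is to combine the structural results about the length-indexed polynomials $f_k,\ldots,f_n$ established above with the Schwartz--Zippel lemma and a union bound over the two sources of randomness ($A$ and the sieving weights $\{r_{uv}\},\{x_{uv}\}$). Recall from Eq.~\eqref{lem: h as g} that the quantity computed in steps {\bf A4}--{\bf A5} is $h = g(\{r_{uv}\},\{x_{uv}\})$, and that by the stratification $g = \sum_{i=k}^{n}\poly W^{t}\poly Y^{i-k}\poly Z^{n-i}f_i + (\text{terms with } \poly W\text{-degree}\neq t \text{ or rainbow count}\neq k)$; more precisely, $[\poly W^{t}\poly Y^{i-k}\poly Z^{n-i}]g = f_i$ for $i\in\{k,\ldots,n\}$, since the only cycle covers producing the monomial pattern $\poly W^{t}\poly Y^{i-k}\poly Z^{n-i}$ are exactly those in $\mathscr C_i$. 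So step {\bf A5} returns the smallest $i$ with $f_i(\{r_{uv}\},\{x_{uv}\})\neq 0$ (or $\infty$ if all such evaluations vanish), and we must show that with probability at least $\tfrac18$ this equals $l$, the minimal length of a $k,v_1v_2$-colourful cycle of weight~$t$ through $v_1v_2$.

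First I would dispatch the case where no such cycle exists: then by Lemma~\ref{lem: f_i zero} (whose proof shows $f_i$ is the zero polynomial whenever $i$ is below the minimal colourful-cycle length, and in particular for all $i$ when no colourful cycle of weight $t$ exists at all) every $f_i$ is identically zero, so $[\poly W^t]h$ is the zero polynomial regardless of the random choices, and the algorithm correctly returns $\infty$. Now assume a shortest colourful cycle of weight $t$ exists and has length $l$. By Lemma~\ref{lem: f_i zero} we have $f_i = 0$ identically for $k\le i<l$, so $f_i(\{r_{uv}\},\{x_{uv}\})=0$ deterministically for those $i$; hence the algorithm never returns a value smaller than $l$. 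It remains to show that with probability at least $\tfrac18$ the evaluation $f_l(\{r_{uv}\},\{x_{uv}\})$ is nonzero, which forces the returned value to be exactly $l$.

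This is where the two randomness sources enter. By Lemma~\ref{lem: f_l nonzero}, over the random choice of $A$ we have $\Pr(f_l\neq 0)>\tfrac14$; condition on this event, so $f_l$ is a fixed nonzero polynomial in the $\{\poly R_{uv}\},\{\poly X_{uv}\}$ variables. Its total degree is at most $n$ (each of the $\le n$ arcs of a cycle cover contributes one such indeterminate), and $|\mathbf F| = 2^{1+\lceil\log_2 n\rceil}>2n$, so $d/|\mathbf F|<\tfrac12$. Applying Lemma~\ref{lem: DLSZ}, conditioned on $f_l\neq 0$ the random evaluation is nonzero with probability $>1-\tfrac12 = \tfrac12$. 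Since $A$ and the sieving weights are drawn independently, the probability that both events occur is $>\tfrac14\cdot\tfrac12 = \tfrac18$. On this joint event the algorithm returns $l$, completing the proof.

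The main obstacle, such as it is, is purely bookkeeping rather than conceptual: one must be careful that the coefficient extraction in step {\bf A5} really isolates $f_i$ and nothing else — that is, that no cycle cover outside $\mathscr C_i$ (with a different rainbow count or different total rainbow weight) can contribute a term with $\poly W$-exponent $t$, $\poly Y$-exponent $i-k$, and $\poly Z$-exponent $n-i$. This follows because a cover with $r$ rainbow arcs, $s$ self-loops, and $n-r-s$ extra arcs yields, via Lemma~\ref{lem: labelsum}, the pattern $\poly W^{q}\poly Y^{n-r-s}\poly Z^{s}$ where $q$ is the rainbow weight; matching all three exponents forces $s=n-i$, $n-r-s=i-k$ hence $r=k$, and $q=t$, i.e.\ the cover lies in $\mathscr C_i=\mathscr C_{t,k,n-i}$. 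One should also note the degree bound $d\le n$ used for Schwartz--Zippel is the degree in the $\poly R$ and $\poly X$ variables only, which is fine since those are the only variables substituted in steps {\bf A2}--{\bf A3}. With these observations in place the argument is a clean union bound, and the claimed success probability $\tfrac18$ is exactly $\tfrac14\times\tfrac12$.
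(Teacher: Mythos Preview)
Your proof is correct and follows exactly the paper's approach: identify $[\poly W^t \poly Y^{i-k}\poly Z^{n-i}]h$ with $f_i(\{r_{uv}\},\{x_{uv}\})$, use Lemmas~\ref{lem: f_i zero} and~\ref{lem: f_l nonzero} to handle $i<l$ and $i=l$ respectively, and finish with Lemma~\ref{lem: DLSZ} for the factor~$\tfrac12$. The only (harmless) slip is the strict inequality $|\mathbf F|>2n$: when $n$ is a power of~$2$ one has $|\mathbf F|=2^{1+\lceil\log_2 n\rceil}=2n$ exactly, so the conditional probability is $\geq\tfrac12$ rather than $>\tfrac12$, but the final bound $\tfrac14\cdot\tfrac12=\tfrac18$ is unchanged.
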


\begin{proof}
  Assume $G$ contains a shortest $k$-colourful cycle $C$ of length $l$. 
  For each $i\in \{k, \ldots, n\}$, the values computed in {\bf A5} can be expressed as 
  \[ 
  [\poly W^t\poly Y^{i-k} \poly Z^{n-i}] h =
  [\poly W^t\poly Y^{i-k} \poly Z^{n-i}] g (\{r_{uv}\}, \{x_{uv}\}) = 
  f_i (\{r_{uv}\}, \{x_{uv}\}) \,.\]
  By Lemma~\ref{lem:  f_i zero}, all of $f_k,\ldots, f_{l-1}$ are the zero polynomial, so none of them can evaluate to nonzero.
  From Lemma~\ref{lem: f_l nonzero}, we have 
  \( \Pr(f_l\neq 0) > \tfrac 14\),
  and from Lemma~\ref{lem: DLSZ} with $d\leq l\leq n$, we have
  \[
    \Pr ( f_l (\{r_{uv}\}, \{x_{uv}\}) \neq 0 \mid f_l\neq 0) \geq 1-\frac{n}{2^{1 +\lceil \log_2 n \rceil}}\geq \tfrac 12\,.
  \]
  Hence, the algorithm correctly returns~$l$ in {\bf A5} with probability at least $\frac 14\cdot\frac 12
\geq \tfrac{1}{8}$.
\end{proof} 

This concludes our proof of Theorem~\ref{thm: SkCC}.

\section{New Reductions from $k,e$-\textsc{Long Cycle}}
In this section we prove Theorem~\ref{thm: bip}~and~\ref{thm: main}. Remember that in the parameterised {\keLC} problem, you are given a positive integer $k$, a graph $G$, and an edge $e$ in the graph, and are tasked to decide whether there is a cycle of length at least $k$ through $e$ in the graph $G$. 

\subsection{The bipartite undirected case}
\label{sec: bip}
We first address Theorem~\ref{thm: bip}. 
In a bipartite undirected  graph $G$, we can reduce \textsc{$k,e$-Long Cycle} to \textsc{Shortest $\tfrac{k}{2},e$-Colourful Cycle} as follows.
Let $G=(U,V,E)$ be the given bipartite graph, with the fixed edge $e\in E$, and identify $V$ with the integers $\{1,2,\cdots,|V|\}$. As input to \textsc{Shortest $\tfrac{k}{2},e$-Colourful Cycle} we simply take $G$ and $e$ with the edge colouring $c(uv)=v$ for $u\in U,v\in V,uv\in E$. If a cycle passes through an edge of colour $v$, it must also necessarily pass through the vertex $v$. Hence we note that a cycle is $\tfrac{k}{2}$-colourful if and only if it passes through at least $\tfrac{k}{2}$ vertices in $V$. Since every cycle alternates between vertices in $U$ and in $V$, we get that a detected $\tfrac{k}{2}$-colourful cycle in $G$ has length at least $k$ in $G$. From Theorem~\ref{thm: SkCC} we see that our resulting running time is $2^{k/2}\operatorname{poly}(n)$.

\subsection{The general undirected case}
We next turn to Theorem~\ref{thm: main}. We begin by describing the algorithm. We will combine two algorithmic components to find a $k$-long cycle through $e=v_1v_2$ when it exists. Identify as before the vertices $V(G)$ with the integers $\{1,2,\cdots,n\}$.

Our first component uses the $k',e$-\textsc{Cycle} algorithm by Bj\"orklund \emph{et al.}~\cite{BjorklundHKK2017} for each $k'=k,k+1,\cdots \min(n,\lfloor \beta k \rfloor)$, to see if there is a cycle of length between $k$ and $\beta k$ for a parameter $\beta>1$ to be fixed later. The running time of this part is at most $1.657^{\beta k}\operatorname{poly}(n)$.

Our second component first samples a vertex subset $S\subseteq V(G)$ by including each vertex in $S$ independently at random with probability $\alpha$. 

We will assign a set of available colours $c:E(G)\rightarrow 2^{[s]}$ to each edge in $G$. To turn this into an input to \textsc{Shortest Colourful Cycle},
we can first intermediately think of $G$ as a multigraph with parallel edges with a single colour $c'$ for each colour $c'\in c(uv)$. This multigraph can next be transformed into a simple graph $G_S$ with edge colouring $c_S:E(G_S)\rightarrow [s]$ by subdividing every edge with both created edges retaining the colour of the original edge. 
Note that any cycle in $G_S$ of length at least $\ell>4$ corresponds to a cycle in $G$ of length $\ell/2$.

We set $c(uv)$ for $uv\in E(G)$ to $\{u\}$ if $u\in S$ and $v\not \in S$, and to $\{v\}$ if $v\in S$ and $u\not \in S$. If both $u,v\in S$, we set $c(uv)=\{u,v\}$.
If neither $u,v\in S$, we let $c(uv)$ be a set containing a single colour $>n$, that is unique for that edge in $G$.
We say an edge $uv\in E(G)$ is \emph{split} if exactly one of $u$ and $v$ are in $S$. 
Split edges $uv\in E(G)$ are given the weight $w(uv)=1$ whereas other edges $uv$ get weight $w(uv)=0$. Finally, we solve \textsc{Shortest $k',t, e$-Colourful Cycle} on $G_S,e,c_S$ with $k'=\lceil (1-(1-\epsilon)(1-\alpha)(\alpha+\alpha^2)\beta/2)k \rceil$ and target weight $t \geq (1-\epsilon)(1-\alpha)(\alpha+\alpha^2)\beta k$ for some small $\epsilon>0$. Note that there are at most $k'$ target weights to try, and the running time of this part is from Theorem~\ref{thm: SkCC} equal to $2^{k'}\operatorname{poly}(n)$.

We now turn to the correctness of the algorithm. If the first part detects a cycle, it is clearly a $k$-long cycle through $e$ and we are done.
If however the first part fails to find a cycle, we know that a $k$-long cycle through $e$, when it exists in $G$, must have length at least $\beta k$. Let $C=\{s_1,s_2,\cdots, s_\ell\}$  be any such cycle. We can divide the cycle into \emph{runs} of consecutive vertices $s_j,s_{j+1},\cdots,s_{j+l}\in S,s_{j-1},s_{j+l+1}\not \in S$ (called $S$-\emph{runs}) and runs of consecutive vertices not in $S$ in between. Our key observation inspired by the bipartite case, is that 
\begin{observation}
\label{obs: Xrun}
In any $S$-run of vertices $s_j,s_{j+1},\cdots,s_{j+l}\in S,s_{j-1},s_{j+l+1}\not \in S$, there is no mapping from the edges $s_{j-1}s_j,s_js_{j+1},\cdots, s_{j+l}s_{j+l+1}$ to $l+2$ different colours in the sets $c$, because there are only $l+1$ colours available in $\bigcup_{i=j-1}^{j+l} c(s_is_{i+1})$.
\end{observation}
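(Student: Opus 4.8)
The plan is a one-line pigeonhole count; essentially all the work is in setting it up correctly. First I would list the $l+2$ edges spanned by the run, namely $s_{j-1}s_j,\ s_js_{j+1},\ \ldots,\ s_{j+l}s_{j+l+1}$, and record the colour set that $c$ assigns to each. By maximality of the $S$-run we have $s_{j-1}\notin S$ and $s_{j+l+1}\notin S$, so the two boundary edges $s_{j-1}s_j$ and $s_{j+l}s_{j+l+1}$ are split (exactly one endpoint in $S$) and hence receive the singleton colour sets $\{s_j\}$ and $\{s_{j+l}\}$; every interior edge $s_is_{i+1}$ with $j\le i<j+l$ has both endpoints in $S$ and therefore receives the two-element set $\{s_i,s_{i+1}\}$. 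In particular no colour $>n$ occurs anywhere in the run.

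Second, I would take the union of these $l+2$ colour sets: the $l$ two-element sets $\{s_j,s_{j+1}\},\ldots,\{s_{j+l-1},s_{j+l}\}$ together cover exactly $\{s_j,\ldots,s_{j+l}\}$ (consecutive pairs overlapping in one vertex), and the two singletons $\{s_j\},\{s_{j+l}\}$ add nothing new, so $\bigcup_{i=j-1}^{j+l} c(s_is_{i+1})=\{s_j,s_{j+1},\ldots,s_{j+l}\}$. Since $C$ is a simple cycle these $l+1$ vertices are pairwise distinct, so this pool has size exactly $l+1$. Finally I would conclude by pigeonhole: a rainbow choice of colours for the run would be an injection from the $l+2$ edges into this pool of $l+1$ colours, and no such injection exists. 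The degenerate case $l=0$ needs no separate treatment — there are then no interior edges, and the two split edges $s_{j-1}s_j$ and $s_js_{j+1}$ both carry the colour $\{s_j\}$, so $2=l+2$ edges compete for a pool of size $1=l+1$.

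I do not expect any genuine obstacle here; the only care needed is bookkeeping at the boundary of the run. Concretely, one must read all indices cyclically so the argument still applies when the run straddles the point where the enumeration $s_1,\ldots,s_\ell$ of $C$ wraps around, must use maximality of the run to know the two boundary edges are split (so that they carry the singleton colours rather than fresh colours $>n$), and must be careful to count the two boundary edges exactly once. It is also worth recording for the later analysis that, of these $l+2$ edges, exactly two are split (weight $1$) and $l$ are unsplit (weight $0$), which is the quantitative content that makes the weight target $t$ in the reduction meaningful.
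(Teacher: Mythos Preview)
Your argument is correct and is exactly the pigeonhole count the paper has in mind; the paper states the observation with its one-line justification (``because there are only $l+1$ colours available in $\bigcup_{i=j-1}^{j+l} c(s_is_{i+1})$'') rather than giving a separate proof, and your proposal is simply a careful unpacking of that clause. No different approach is involved.
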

Hence, we want to find a cycle with many $S$-runs as at least one edge per $S$-run cannot be a rainbow edge in the colourful cycle detected by the algorithm. Thus each $S$-run will get us at least one free edge. We note that the number of split rainbow edges provides a lower bound on two times the number of $S$-runs along any cycle detected by the algorithm.
With the chosen parameters $k'=\lceil (1-(1-\epsilon)(1-\alpha)(\alpha+\alpha^2)\beta/2)k \rceil$ and target weight $t\geq (1-\epsilon)(1-\alpha)(\alpha+\alpha^2)\beta k$, we 
will detect a cycle of length at least $k'+\frac{t}{2}\geq k$.

We next show that a long enough cycle in $G$ guarantees that we can find a colourful cycle with many $S$-runs.
\begin{lemma}
\label{lem: colourful}
For any fixed $\epsilon>0$,  any $\beta k$-long cycle $C$ has a set of at least $(1-(1-\epsilon)(1-\alpha)(\alpha))\beta k$ differently coloured edges of total weight at least $(1-\epsilon)(1-\alpha)(\alpha+\alpha^2)\beta k$ with probability at least $1-\exp(-\Omega(k))$.
\end{lemma}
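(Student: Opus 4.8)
The plan is to analyze a fixed $\beta k$-long cycle $C = (s_1, \dots, s_\ell)$ with $\ell \ge \beta k$ and count, via concentration, how many differently-coloured rainbow edges and how much total weight it contributes once the random set $S$ is drawn. First I would set up the right random variables. Since each vertex lies in $S$ independently with probability $\alpha$, for each consecutive pair $(s_i, s_{i+1})$ on $C$ I would classify the edge by the membership pattern of its two endpoints: both outside $S$ (gives a private colour $>n$, weight $0$), exactly one inside (a \emph{split} edge, weight $1$, colour equal to the single $S$-endpoint), or both inside (weight $0$, colour set of size two). The quantity I want to lower-bound is the maximum number of distinctly-coloured edges I can select together with their total weight; by Observation~\ref{obs: Xrun} the only ``loss'' of distinct colours comes from $S$-runs, where a run of $l+1$ consecutive $S$-vertices forces at least one of its $l+2$ incident edges to be non-rainbow.

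The key combinatorial step is to convert ``number of $S$-runs'' and ``number of split edges'' into the stated bounds. Along $C$, the split edges are exactly the boundary edges of $S$-runs, and each $S$-run contributes exactly two split edges (its two endpoints), so the number of split edges is twice the number of $S$-runs; moreover each $S$-run costs us at most one distinct colour. Hence if $X$ denotes the number of split edges on $C$, we can select at least $\ell - X/2$ distinctly-coloured edges, and the total weight of the split edges among any such selection can be taken to be $X$ (all split edges can be kept as rainbow except one per run, but the weight we need is the weight carried by the selected rainbow edges — I would in fact select all split edges as rainbow and sacrifice one non-split edge per $S$-run whenever the run has an interior, which is the generic case; the boundary cases with singleton runs need a small separate check). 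The cleanest route is: let $Y$ be the number of $S$-runs, so split edges number $2Y$ and we lose at most $Y$ colours, giving $\ge \ell - Y$ distinct colours and, since split edges have weight $1$, total selected weight $\ge 2Y$ minus a lower-order correction. Then I would compute $\mathbb{E}[Y]$: an $S$-run starts at position $i$ iff $s_i \in S$ and $s_{i-1} \notin S$, an event of probability $\alpha(1-\alpha)$, so $\mathbb{E}[Y] = \ell\,\alpha(1-\alpha)$, and the number of $S$-vertices has expectation $\alpha\ell$, so split edges have expectation $2\ell\alpha(1-\alpha)$ while the $\alpha^2$ term in the lemma accounts for both-in-$S$ edges contributing extra \emph{colour} savings (two colours on one edge) — I would need to bookkeep the colour count as $\ell$ minus (number of runs) minus (number of both-in-$S$ edges counted with a sign), which is where the $(\alpha + \alpha^2)$ grouping in the statement comes from, since the weight contributed is $2Y$ and the colour deficit is governed by $Y$ plus the internal structure.

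For concentration I would apply a bounded-differences / McDiarmid argument on the $\ell$ independent indicator variables $\mathbf 1[s_i \in S]$: flipping one coordinate changes the number of $S$-runs, the number of split edges, and the number of both-in-$S$ edges each by $O(1)$, so both the distinct-colour count and the total-weight functionals are $c$-Lipschitz with $c = O(1)$ per coordinate. McDiarmid's inequality then gives deviation below $(1-\epsilon)$ times the mean with probability at most $\exp(-\Omega(\epsilon^2 \ell)) = \exp(-\Omega(k))$, since $\ell \ge \beta k = \Omega(k)$. Taking a union bound over the two functionals preserves $1 - \exp(-\Omega(k))$. One subtlety: some of the $\ell$ pair-indicators overlap (vertex $s_i$ influences edges $s_{i-1}s_i$ and $s_is_{i+1}$), but since the underlying independent coordinates are the \emph{vertex} indicators, not the edge classifications, McDiarmid applies directly to the vertex indicators with a slightly larger Lipschitz constant; that is the natural framing and avoids any dependence issue.

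The main obstacle I anticipate is getting the constants exactly right — matching the claimed $(1-(1-\epsilon)(1-\alpha)\alpha)\beta k$ distinct colours and $(1-\epsilon)(1-\alpha)(\alpha+\alpha^2)\beta k$ total weight — because this requires carefully accounting for (i) the double-colour savings on both-in-$S$ edges, (ii) the off-by-one effects at run boundaries and at the wraparound of the cyclic sequence, and (iii) ensuring the \emph{same} selection of rainbow edges simultaneously achieves both the colour lower bound and the weight lower bound (the weight lives on split edges, the colour loss is charged to non-split interior edges of runs, so the two are compatible, but this needs to be stated precisely). Once the exact expectations $\mathbb{E}[\#\text{distinct colours}] \ge \ell(1 - (1-\alpha)\alpha) - o(\ell)$ and $\mathbb{E}[\#\text{weight}] \ge \ell(1-\alpha)(\alpha+\alpha^2) - o(\ell)$ are pinned down (the lower-order terms absorbed into the $\epsilon$ slack), the concentration step is routine.
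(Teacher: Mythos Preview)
Your overall framing is sound and close to the paper's: both arguments charge one lost colour per $S$-run (so the distinct-colour count is $\ell - Y$ with $Y$ the number of runs, matching the paper's $\ell - \tau/2$), and both take the split edges as the weight-carriers. Your use of McDiarmid on the independent vertex indicators is a legitimate alternative to the paper's device of partitioning edge positions modulo~$2$ (for run starts) and modulo~$3$ (for the three-vertex pattern) so that Chernoff applies directly within each class; either concentration tool works here.

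The genuine gap is in your weight accounting. You write that the selectable weight is ``$2Y$ minus a lower-order correction'' and that ``the boundary cases with singleton runs need a small separate check''. But singleton runs are not a boundary effect: a pattern $\bar S, S, \bar S$ occurs at each position with probability $\alpha(1-\alpha)^2$, so there are $\Theta(\ell)$ of them in expectation. For a singleton run both incident split edges carry the \emph{same} colour (the lone $S$-vertex), so one of them \emph{must} be dropped from any rainbow selection. Hence the selectable split-edge weight is not $2Y$ but
\[
2Y - (\#\text{singleton runs}) \;\approx\; 2\alpha(1-\alpha)\ell - \alpha(1-\alpha)^2\ell \;=\; (1-\alpha)(\alpha+\alpha^2)\ell\,.
\]
That subtraction is exactly the source of the $(\alpha+\alpha^2)$ factor in the statement --- not, as you speculate, ``both-in-$S$ edges contributing extra colour savings'' (those edges have weight~$0$ and do not enter the weight bound at all). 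The paper encodes this cleanly by defining the right-boundary set $X_{\text{right}}$ to require the pattern $S,S,\bar S$, so that only runs of length $\geq 2$ contribute an exit split edge and every edge in $X_{\text{left}}\cup X_{\text{right}}$ gets a distinct $S$-vertex as its colour by construction. Once you replace ``lower-order correction'' with this first-order singleton-run term, your McDiarmid route finishes the proof.
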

\begin{proof}
Let $C=\{s_1,s_2,\cdots, s_\ell\}$ and define the set $X=X_{\mbox{\tiny left}}\cup X_{\mbox{\tiny right}}$ with $X_{\mbox{\tiny left}}$ consisting of the edges $s_is_{i+1}$ with $s_i\not \in S,s_{i+1}\in S$, and $X_{\mbox{\tiny right}}$ consisting of the edges $s_is_{i+1}$ with $s_{i-1},s_i\in S,s_{i+1}\not \in S$. Note that all edges in $X$ have a unique colour in $c$ (observe the asymmetry in $X_{\mbox{\tiny left}}$ and $X_{\mbox{\tiny right}}$), and that they are all split edges. With $\tau$ denoting the total number of split edges along $C$, we have that $|X|\geq |X_{\mbox{\tiny left}}|=\tfrac{\tau}{2}$. The expected number of edges in $X_{\mbox{\tiny left}}$ is $(1-\alpha)\alpha\ell$, and in $X_{\mbox{\tiny right}}$, $\alpha^2(1-\alpha)\ell$. Together the expected number of edges in $X$ is $(1-\alpha)\alpha\ell+\alpha^2(1-\alpha)\ell\geq  (1-\alpha)(\alpha+\alpha^2)\beta k$.
We want to lower bound the probability that both $|X_{\mbox{\tiny left}}|$ and $|X|$ are close to their respective expectations. To this end, we partition $X_{\mbox{\tiny left}}=\bigcup_{j=0}^1 X_{\mbox{\tiny left}}^{(j)}$ with $ X_{\mbox{\tiny left}}^{(j)}=\{s_is_{i+1}\in  X_{\mbox{\tiny left}}|i\equiv j \operatorname{mod } 2\}$, and  $X_{\mbox{\tiny right}}=\bigcup_{j=0}^2 X_{\mbox{\tiny right}}^{(j)}$ with $ X_{\mbox{\tiny right}}^{(j)}=\{s_is_{i+1}\in  X_{\mbox{\tiny right}}|i\equiv j \operatorname{mod } 3\}$. The reason for the partitioning is that in each $X_{\mbox{\tiny left}}^{(j)}$ and $X_{\mbox{\tiny right}}^{(j)}$ the elements are included with independent probability of each other and their cardinalities can be analysed by Chernoff bounds. In particular, for an arbitrarily small $0<\epsilon<1$ we have
\[
\Pr\left(\left| |X_{\mbox{\tiny left}}^{(j)}|-\mathbb{E}(|X_{\mbox{\tiny left}}^{(j)}|)\right | > \epsilon\mathbb{E}(|X_{\mbox{\tiny left}}^{(j)}| \right)\leq 2\exp(-\tfrac{\epsilon^2\mathbb{E}(|X_{\mbox{\tiny left}}^{(j)}|)}{3})\leq 2\exp(-\tfrac{\epsilon^2\alpha(1-\alpha)\ell}{6}),
\]
and
\[
\Pr\left(\left| |X_{\mbox{\tiny right}}^{(j)}|-\mathbb{E}(|X_{\mbox{\tiny right}}^{(j)}|)\right| > \epsilon\mathbb{E}(|X_{\mbox{\tiny right}}^{(j)}| \right)\leq 2\exp(-\tfrac{\epsilon^2\mathbb{E}(|X_{\mbox{\tiny right}}^{(j)}|)}{3})\leq 2\exp(-\tfrac{\epsilon^2\alpha^2(1-\alpha)\ell}{9}).
\]
Hence, for any fixed $\epsilon$, all these five events' probabilities are of size $\exp(-\Omega(k))$. From a union bound of all of them, the probability that none of these events happen, meaning that both $|X_{\mbox{\tiny left}}| \in (1\pm \epsilon)\alpha(1-\alpha)\ell$ and $|X| \in  (1\pm \epsilon)(1-\alpha)(\alpha+\alpha^2)\ell$, is $1-\exp(-\Omega(k))$.
In the following assume that this is the case.
We will next try to find as many differently coloured edges as possible. We can find different colours for all edges in an $S$-run except for one arbitrarily chosen edge since the edges have their end-points in $S$ as available colours. In particular, we can pick different colours for all but a non-split edge in the $S$-run whenever one exists. Furthermore, these chosen colours are not available for any other edges on $C$ and hence are different from whatever colours other parts of $C$ use. Note also that edges $s_is_{i+1}$ with $s_i,s_{i+1}\not \in X$ have  a unique colour by design, and also that we can assume that at least $(1-\epsilon)(1-\alpha)(\alpha+\alpha^2)\beta k$ edges in $X$ are part of the differently coloured edges selected. The number of $S$-runs on $C$ is $\tau/2$. Together, we can find a set of at least $\ell-\tau/2= \ell-|X_{\mbox{\tiny left}}|\geq \beta k - (1+\epsilon)(1-\alpha)\alpha\beta k$ differently coloured edges along $C$ in $G$ that contains $(1-\epsilon)(1-\alpha)(\alpha+\alpha^2)\beta k$ split edges.
\end{proof}

From the above Lemma~\ref{lem: colourful}, we can conclude that our algorithm will detect a $k$-long cycle in $G$ through $e$ with probability $1-\exp(-\Omega(k))$ times the success probability of the algorithm in Theorem~\ref{thm: SkCC}, provided 
\[
(1-(1+\epsilon)(1-\alpha)\alpha)\beta k\geq k'=\lceil (1-(1-\epsilon)(1-\alpha)(\alpha+\alpha^2)\beta/2)k \rceil.
\]
With the parameters $\alpha=.5774$, $\beta=1.0856$, and $\epsilon=.01$, we satisfy the above and get running time 
\[
\max(1.657^{\beta k},2^{k'})\operatorname{poly}(n)<1.7304^k\operatorname{poly}(n).
\]

\section*{Appendix}
\begin{proof}[Proof of Lemma~\ref{lem: cc2per}.]
  If $D$ is simple (with loops), the result is well-known.
  It is based on the observation that every cycle cover $K$ corresponds to exactly one permutation in Eq.~\eqref{eq: per def},
  namely $\sigma(i)=j$ for the arc $(v_i,v_j)$.
  \medskip

  Otherwise consider a pair $(u,v)$ of distinct vertices and let $\{e_1,\ldots, e_m\}$ be the multiple arcs from $u$ to $v$.
  Every cycle cover uses at most one of these arcs, so we
  can partition the cycle covers 
  $\mathscr C$ of $D$ into $\mathscr C_0, \mathscr C_1,\ldots, \mathscr C_m$ according to which of those $m$ arcs they contain.
  (Here, $\mathscr C_0$ contains the cycle covers that don't use any of them.)
  For $i\in \{1,\ldots, m\}$,
  the contribution of a cycle cover $K_i\in \mathscr C_i$ is then
  \[ \lambda(K_i)= \prod_{e\in E(K_i)} \lambda (e) = 
  \lambda(e_i)\prod_{e\in E(K_i) -\{e_i\}} \lambda(e) = 
  \lambda(e_i)\prod_{e\in\dot E(K_i)} \lambda(e)
  \,,\]
  where we have introduced the notation 
  \[ \dot E(H) =   E(H) - \{ \, e\in E(H) \colon \operatorname{init}(e) = u, \operatorname{ter}(e) =v \,\}\]
  for $H$ with all arcs from $u$ to $v$ removed (for an arc $e=(u,v)$, we write $\operatorname{init}(e)=u$ and $\operatorname{ter}(e)=v$).
  
  We now have an expression for the label sum over all cycle covers:
  \begin{equation}\label{eq: split label sum}
  \sum_{K \in \mathscr C} \lambda(K) = 
  \sum_{i=0}^m 
  \sum_{K_i \in \mathscr C_i} \lambda(K_i) = 
  \sum_{K_0\in \mathscr C_0}\lambda (K_0) +
  \sum_{i=1}^m 
    \lambda(e_i)
    \sum_{K_i \in \mathscr C_i} 
  \prod_{e \in \dot E(K_i)} \lambda(e) \,.
  \end{equation}
  Crucially, the $i$s in the rightmost sum play no role, because
  \[
    \sum_{K_i \in \mathscr C_i} \prod_{e \in \dot E(K_i)} \lambda(e) 
    = 
    \sum_{K_1 \in \mathscr C_1} \prod_{e \in \dot E(K_1)} \lambda(e) \,.
  \]
  To see this, any cycle cover $K_i\in \mathscr C_i$ can be turned into a cycle cover $K_j\in \mathscr C_j$ by changing the arc $e_i$ to $e_j$,
  and the contribution of $K_i$ and $K_j$ to the product (where neither $e_i$ or $e_j$ appears) is the same.
  Thus, we can reassemble the rightmost term of Eq.~\eqref{eq: split label sum} as follows:
  \begin{multline*}
  \sum_{i=1}^m 
    \lambda(e_i)
    \sum_{K_i \in \mathscr C_i} 
    \prod_{e \in \dot E(K_i)} \lambda(e)  =
  \sum_{i=1}^m 
    \lambda(e_i)
    \sum_{K_1 \in \mathscr C_1} 
    \prod_{e \in \dot E(K_1)} \lambda(e) 
    =\\
    \left(\sum_{K_1 \in \mathscr C_1} 
    \prod_{e \in \dot E(K_1)} \lambda(e) \right)
  \sum_{i=1}^m 
    \lambda(e_i)
    =
    \bigl(\lambda(e_1) +\cdots + \lambda(e_m)\bigr)
    \sum_{K_1 \in \mathscr C_1} 
    \prod_{e \in \dot E(K_1)} \lambda(e)
    \,.
  \end{multline*}

  Now construct a new directed graph $D'$ from $D$ by removing $e_1,\ldots, e_m$ and replacing them by a single new arc~$e'$ with label $\lambda(e') = \lambda(e_1)+\cdots + \lambda(e_m)$.
  Clearly, the cycle covers of $D'$ can be partitioned into two sets:
  those that use the arc~$e'$ and those that avoid it.
  The former cycle covers also avoid $\{e_1,\ldots, e_m\}$ in $D$, so they constitute the same set $\mathscr C_0$.
  The latter cycle covers are in one-to-one correspondence with $\mathscr C_1$ (by identifying $e_1\in E(D)$ with $e'\in E(D')$), and the contribution from such a cover $K'$ of $D'$ is the same:
  \[ \prod_{e\in E(K')}  \lambda(e) = 
  \lambda(e_{uv}) 
  \prod_{e\in \dot E(K') } \lambda(e) =
  \bigl(\lambda(e_1) + \cdots +\lambda(e_m)\bigr)
  \prod_{e\in \dot E(K') } \lambda(e)\,.\]
  In particular, $D$ and $D'$ have the same label sum.

  We repeat this process until no multiple arcs remain.
\end{proof}

\ifdefined\A
\else
\subsection*{Acknowledgements}
This work is supported by the VILLUM Foundation, Grant 16582.
\fi

\bibliographystyle{abbrv}
\bibliography{longestcycle}

\end{document}